\mathchardef\dash="2D
\newcommand{\V}{\mathcal{V}}
\newcommand{\N}{\mathbb{N}}
\newcommand{\R}{\mathbb{R}}
\newcommand{\defeq}{\triangleq}
\newcommand{\g}[1]{!_{#1} \,}
\newcommand{\eps}[1][n]{\epsilon^{#1}}  
\DeclarePairedDelimiter\norm{\lVert}{\rVert}%
\newcommand{\perm}{\mathbin{\simeq_\pi}}
\DeclareMathOperator{\sym}{\mathrm{Sym}}
\newcommand{\Perm}[1][n]{\sym\left(#1\right)}
\newcommand{\EndoC}{[\catC,\catC]}
\newcommand{\MonEndoC}{\catfont{Mon}[\catC,\catC]}
\newcommand{\SMEndoC}{\catfont{SymMon}[\catC,\catC]}
\newcommand{\Ban}{\catfont{Ban}}
\newcommand{\Cats}{\catfont{Cat}}
\newcommand{\catM}{\catfont{M}}
\newcommand{\Met}{\catfont{Met}}
\newcommand{\Free}[1]{\sfunfont{F}}
\newcommand{\typeI}{\typefont{I}}
\newcommand{\Shuff}{\mathrm{Sf}}
\newcommand{\VCat}{\mathcal{V}\text{-}\Cats}
\newcommand{\VCatSy}{\mathcal{V}\text{-}\Cats_{\mathsf{sym}}}
\newcommand{\VCatSe}{\mathcal{V}\text{-}\Cats_{\mathsf{sep}}}
\newcommand{\VCatSS}{\mathcal{V}\text{-}\Cats_{\mathsf{sym,sep}}}
\newcommand{\Syn}{\mathrm{Syn}}
\newcommand{\conc}{\mathbin{+\mkern-08mu+}}
\newcommand{\catfont}[1]{\mathsf{#1}}
\newcommand{\cop}{\catfont{op}}
\newcommand{\catC}{\catfont{C}}
\newcommand{\catA}{\catfont{A}}
\newcommand{\catB}{\catfont{B}}
\newcommand{\Set}{\catfont{Set}}
\newcommand{\Pos}{\catfont{Pos}}
\newcommand{\Subs}[2]{\catfont{Sub}_{}}
\newcommand{\funfont}[1]{#1}
\newcommand{\funF}{\funfont{F}}
\newcommand{\sfunfont}[1]{\mathrm{#1}}
\newcommand{\Id}{\sfunfont{Id}}
\newcommand{\Diag}{\mathscr{D}}
\newcommand\adjunct[2]{\xymatrix@=8ex{\ar@{}[r]|{\top}\ar@<1mm>@/^2mm/[r]^{{#2}}
& \ar@<1mm>@/^2mm/[l]^{{#1}}}}
\newcommand\adjunctop[2]{\xymatrix@=8ex{\ar@{}[r]|{\bot}\ar@<1mm>@/^2mm/[r]^{{#2}}
& \ar@<1mm>@/^2mm/[l]^{{#1}}}}
\newcommand\retract[2]{\xymatrix@=8ex{\ar@{}[r]|{}\ar@<1mm>@/^2mm/@{^{(}->}[r]^{{#2}}
& \ar@<1mm>@/^2mm/@{->>}[l]^{{#1}}}}
\newcommand{\sem}[1]{\llbracket #1 \rrbracket}
\newcommand{\app}{\mathrm{app}}
\newcommand{\comp}{\cdot}
\DeclareMathOperator{\id}{\mathsf{id}}
\DeclareMathOperator{\sw}{\gamma}
\DeclareMathOperator{\spl}{\mathsf{sp}}
\DeclareMathOperator{\sh}{\mathsf{sh}}
\DeclareMathOperator{\join}{\mathsf{jn}}
\DeclareMathOperator{\exch}{\mathsf{exch}}
\newcommand{\resp}{resp.\ }
\newcommand{\cf}{\textit{cf.\ }}
\newcommand{\ie}{\textit{i.e.\ }}
\newcommand{\eg}{\textit{e.g.\ }}
\newcommand{\wrt}{w.r.t.\ }
\newcommand{\iid}{i.i.d.\ }
\newcommand{\Nats}{\mathbb{N}}
\newcommand{\typefont}[1]{\mathbb{#1}}
\newcommand{\typeA}{\typefont{A}}
\newcommand{\typeB}{\typefont{B}}
\newcommand{\typeC}{\typefont{C}}
\newcommand{\rulename}[1]{(\mathrm{\mathbf{#1}})}
\newcommand{\vljud}{\rhd}
\newcommand{\prog}[1]{\mathtt{#1}}
\begin{document}
\begin{frontmatter}
  \title{A Complete $\V$-Equational System for Graded $\lambda$-Calculus} 						
  \author{Fredrik Dahlqvist\thanksref{a}\thanksref{fredemail}}	
   \author{Renato Neves\thanksref{b}\thanksref{renatoemail}}		
   \address[a]{Queen Mary University of London and University College London, United Kingdom}  							
   \thanks[fredemail]{Email: \href{mailto:f.dahlqvist@qmul.ac.uk} {\texttt{\normalshape
        f.dahlqvist@qmul.ac.uk}}} 
  \address[b]{University of Minho \& INESC-TEC, Portugal} 
  \thanks[renatoemail]{Email:  \href{mailto:nevrenato@di.uminho.pt} {\texttt{\normalshape
        nevrenato@di.uminho.pt}}}
\begin{abstract} 
        Modern programming frequently requires generalised notions of program
        equivalence based on a metric or a similar structure. Previous work
        addressed this challenge by introducing the notion of a $\V$-equation,
        \ie an equation labelled by an element of a quantale $\V$, which covers
        \textit{inter alia} (ultra-)metric, classical, and fuzzy (in)equations.
        It also introduced a $\V$-equational system for the \emph{linear}
        variant of $\lambda$-calculus where any given resource must be used
        exactly once.

        In this paper we drop the (often too strict) linearity constraint by
        adding graded modal types which allow multiple uses of a resource in a
        controlled manner. We show that such a control, whilst providing more
        expressivity to the programmer, also interacts more richly with
        $\V$-equations than the linear or Cartesian cases.  Our
        main result is the introduction of a sound and complete $\V$-equational
        system for a $\lambda$-calculus with graded modal types interpreted by
        what we call a \emph{Lipschitz exponential comonad}.  We also show how
        to build such comonads canonically via a universal construction, and
        use our results to derive graded metric equational systems (and
        corresponding models) for programs with timed and probabilistic
        behaviour.
\end{abstract}
\begin{keyword}
$\lambda$-calculus, graded modal type, quantitative equational theory,
        enriched category theory.
\end{keyword}
\end{frontmatter}

\section{Introduction}\label{sec:intro}

This paper tackles the challenge of reasoning about program equivalence in
computational paradigms with an intrinsic quantitative nature, such as timed
and probabilistic computation. This usually calls for notions of program
equivalence based on a  quantity (often a metric), \emph{in lieu} of the sharp,
binary ones relating classical programs. For example, instead of checking
whether two programs terminate \emph{exactly} at the same time one might be
more interested in checking whether they terminate with a small difference
between their execution times.  Similarly, on the probabilistic side, it makes
sense to consider that two Bayesian inference algorithms are equivalent if they
agree up to some small (total variation) error $\varepsilon$ when sampling from
the same target posterior distribution.  In order to reason in this
way,~\cite{dahlqvist22} introduced the notion of a $\V$-equation, \ie an
equation labelled by an element of a quantale $\V$, that serves as an abstract
notion of `quantitative equality'.  This covers, for example, (ultra-)metric and
fuzzy (in)equations, among others. Additionally \cite{dahlqvist22} presented a
$\V$-equational system for the \emph{linear} version of $\lambda$-calculus
which imposes that any given resource must be used exactly once.

The aim of this work is to overcome this linearity constraint whilst retaining
the ability to reason quantitatively about program equivalence. We do so by
adding \emph{graded modal
types}~\cite{girard1992bounded,gaboardi16,orchard2019quantitative} (a way of
permitting multiple uses of a given resource) to the aforementioned
$\V$-equational framework of linear $\lambda$-calculus~\cite{dahlqvist22}. The
result is a compromise between standard, non-linear $\lambda$-calculus which is
to some degree incompatible with quantitative reasoning (see the negative
results of~\cite[\S 6]{ugo2021}) and linear $\lambda$-calculus which can be
combined with quantitative reasoning~\cite{dahlqvist22} but is cumbersome for
many non-linear applications.

Let us illustrate this compromise with a simple example that involves metric
equations~\cite{mardare17} and timed computation~\cite{dahlqvist22}. Consider a
ground type $X$ and a signature $\{ \prog{wait_n} : X \to X \mid n \in \Nats
\}$ of wait calls -- intuitively, a term $\prog{wait_n}(x)$ reads as ``add a latency
of $n$ seconds to computation $x$". As discussed in~\cite{dahlqvist22}, a
series of metric equations arise naturally from this computational paradigm.
For example, 
\begin{align}
        \label{eq:ex}
        \lambda x. \, \prog{wait_1}(x) =_{1} \lambda x. \, \prog{wait_2}(x)
\end{align}
states that when fed the same argument these $\lambda$-terms yield computations
whose execution times differ by at most one second.  Now, as a useful principle
that underpins compositionality we would like that for all $\lambda$-terms $u$
the application function $v \mapsto u \, v$ satisfies the implication $v =_q w
\Rightarrow u \, v =_q u \, w$, \ie\ it is non-expansive w.r.t.  distances
between programs. This is impossible in the Cartesian setting, because $u$ may
contain multiple ocurrences of a variable (corresponding to multiple uses of a
given resource). Let $u$ for example be $\lambda f. \, \lambda y.  \, f \, (f
\, y)$.  Then $u \, (\lambda x. \, \prog{wait_1}(x))$ corresponds to an
execution time of two seconds and $u \, (\lambda x. \, \prog{wait_2}(x))$ to
four seconds, a two-second difference that violates the
implication for \eqref{eq:ex}.  The graded setting explored in this paper
serves as middleground between the linear and Cartesian cases: it
increases distances proportionally to the number of times a resource is usable
and at the same time forbids $u$ from using a resource more times than
stipulated. Specifically for the case just presented one can mark $\lambda x.
\, \prog{wait_1}(x)$ (resp.  $\lambda x.  \, \prog{wait_2}(x)$) to be usable
\emph{precisely twice}, via a `promotion construct' $!_2(-)$, and according
to our graded equational system deduce the metric equation,
\[!_2
\big ( \lambda x.  \, \prog{wait_1}(x) \big ) =_{1+1 } \, !_2 \big (\lambda x. \,
\prog{wait_2}(x) \big )
\]
We then use the graded typing system to ensure $u$ uses the received argument
precisely twice. We will see that this ensures the non-expansiveness of the
application function -- actually of the more general case $(u,v) \mapsto u \,
v$ -- amongst other benefits.

\noindent
\textbf{Contributions and outline.}
We present a sound and complete $\V$-equational system for a graded
$\lambda$-calculus. The corresponding interpretation is based on symmetric
monoidal closed categories enriched over `generalised metric spaces' and
equipped with a \emph{Lipschitz exponential comonad}, a natural extension of
the concept of graded exponential comonad~\cite{gaboardi16,katsumata18} to the
setting of $\V$-equations.  Furthermore, we show how to canonically build
Lipschitz exponential comonads over symmetric monoidal closed categories that
satisfy mild conditions. The construction is inspired by~\cite{mellies09}, and
based on the notion of a cofree graded commutative comonoid together with a
certain kind of enriched limit.

\S\ref{sec:calculus} introduces a graded $\lambda$-calculus and
an equational system that characterises term equivalence. This
calculus fundamentally differs from previous
ones~\cite{brunel14,gaboardi16,orchard2019quantitative} in that the
substitution rule in its standard format is derivable -- this is key to our
completeness result. \S\ref{sec:calculus} also presents an interpretation of
the calculus via symmetric monoidal closed (a.k.a.\ autonomous) categories
together with graded exponential comonads~\cite{gaboardi16,katsumata18}. It
then proves soundness of the aforementioned equational system \wrt this
interpretation.  \S\ref{sec:quant_eq} extends \S\ref{sec:calculus} to the
$\V$-equational setting. Specifically, it equips our graded $\lambda$-calculus
with a $\V$-equational system and shows how to interpret it via autonomous
categories enriched over generalised metric spaces together with Lipschitz
exponential comonads. It also shows that the $\V$-equational system is sound
and complete \wrt this interpretation (Theorem \ref{theo:compl}). This result is highly generic and covers metric equations, classical (in)equations and ultra-metric and fuzzy variants. To the best of our knowledge this
completeness result even for the basic case of classical equations is new.
\S\ref{sec:canonical} details the aforementioned canonical construction of Lipschitz
exponential comonads and \S\ref{sec:examples} uses it as basis to provide
metric higher-order models of both timed and probabilistic computation. In the
former case the model that we canonically obtain is based on the category of
metric spaces and non-expansive maps with the underlying Lipschitz comonad
being that of dilations~\cite{katsumata18}. In the latter case the model is
based on the category of Banach spaces and short linear maps with the
underlying Lipschitz comonad arising from a process of symmetrisation
well-known in linear algebra~\cite{bourbaki98,comon08}.  We assume basic knowledge of (enriched)
category theory.

\noindent
\textbf{Related work.}
The need for quantitative notions of program equivalence has been 
explored in several concrete computational paradigms. This is the case for example
of~\cite{reed10}, \cite{hung19}, and~\cite{crubille15,crubille17} which
introduce metric reasoning mechanisms for differential privacy, quantum, and
probabilistic computation respectively. Other works take a more general
perspective. For example on the side of universal algebra there has been
great progress on the closely related topic of quantitative algebra, with 
focus typically on metric equations and inequations, see for
example~\cite{mardare16,mardare17,rosicky21,adamek21}. In fact, one case with a
particularly interesting connection to ours is~\cite{dagnino22}: it explores a
notion of quantitative equality with graded modalities and studies a
corresponding \emph{algebraic semantics} via Lawvere's doctrines. Our target
is, however, $\lambda$-calculus. This sets us apart from these approaches, and in this
regard positions us closer to the quantitative approaches targetting
$\lambda$-calculi such as~\cite{gavazzo18}
and~\cite{gavazzo23} which use the notion of a quantale to introduce
quantitative counterparts of applicative (bi)similarity and rewriting systems
respectively. Another example is~\cite{pistone21} which studies quantitative
semantics of simply-typed $\lambda$-calculi based on a generalisation of
logical relations.

\section{A graded $\lambda$-calculus and its interpretation} 
\label{sec:calculus}

\subsection{The calculus}

We start by presenting our graded $\lambda$-calculus. In a nutshell, it is a
graded extension of the linear-non-linear $\lambda$-calculus
in~\cite{benton92,benton1994mixed} and can be seen as a term assignment system
for a graded version of \emph{intuitionistic linear logic}. Aside from the use
of grades, the main difference with~\cite{benton92,benton1994mixed} is the use
of a shuffling mechanism~\cite{shulman19} that allows to refer to a
$\lambda$-term's denotation unambiguously (more details below).

\noindent \textbf{Types.} As usual with graded modal
types~\cite{girard1992bounded,gaboardi16,orchard2019quantitative}, we
fix a semiring $\mathcal{R} = (R,0,1,+, \, \cdot)$ of `resource quantities'.
We then fix a set $G$ of ground types and consider the following
grammar of types:
\[
  \typeA ::=  X \mid \typeI \mid 
  \typeA \otimes \typeA \mid \typeA \multimap \typeA \mid\ \g{r} \typeA 
  \hspace{2.5cm} (X \in G, r \in \mathcal{R}).
\]
Elements of $R$ will be called \emph{grades}.  The grade $r$ associated with a
modal type $\g{r} \typeA$ intuitively represents \emph{how much} of a resource
we possess. For example, in the case of $\mathcal{R}$ being the semiring of
natural numbers $r$ may be regarded as the number of times a resource can be
used before depletion.

\noindent\textbf{Contexts and shuffles.} We use Greek uppercase letters
$\Gamma, \Delta, E, \dots$ to denote typing contexts, \ie lists of typed
variables $x_1 : \typeA_1, \dots, x_n : \typeA_n$ such that each $x_i$ occurs
at most once. As already mentioned, we will also use the notion of a shuffle: a
permutation of typed variables in a context sequence $\Gamma_1,\dots,\Gamma_n$
such that for all $i \leq n$ the relative order of the variables in $\Gamma_i$
is preserved~\cite{shulman19}.  For example, if $\Gamma_1 = x : \typeA, y :
\typeB$ and $\Gamma_2 = z : \typeC$ then $z : \typeC, x : \typeA, y : \typeB$
is a shuffle but $y : \typeB, x : \typeA, z : \typeC$ is \emph{not}, because we
changed the order in which $x$ and $y$ appear in $\Gamma_1$. We denote by
$\Shuff(\Gamma_1;\dots;\Gamma_n)$ the set of shuffles on
$\Gamma_1,\dots,\Gamma_n$. Shuffles will be used to build a graded
$\lambda$-calculus where the exchange rule is admissible and at the same time
each judgement $\Gamma \vljud v : \typeA$ has a unique derivation
(Theorem~\ref{thm:unique_typing}). This will allow us to refer to a judgement's
denotation $\sem{\Gamma \vljud v : \typeA}$ unambiguously.

\noindent\textbf{Terms.} Fix a set $\Sigma$ of sorted operation symbols $f:
\typeA_1,\dots,\typeA_n \to \typeA$ with $n \geq 1$. The term formation rules
of the graded calculus are listed in Figure~\ref{fig:lang}. By convention all
contexts involved in the premisses of any of the listed rules are mutually
disjoint. This entails for instance that in $\rulename{\otimes_e}$ neither $x$
nor $y$ can occur in $\Gamma$ and analogously for $\rulename{!_{{n+m}}}$. The
rules above the dotted line are standard and in correspondence to the natural
deduction rules of exponential-free intuitionistic linear logic; we omit here
their explanation.  As for the others, the promotion rule $\rulename{!_i}$
allows the use of a term `$r$-times' by intuitively binding all variables $x_i
: \, \g{s_i} \typeA_i$ in its context to terms $v_i$ whose type $\g{r\cdot s_i}
\typeA_i$ is graded by the `$r$-multiple' of $s_i$.  The dereliction rule
$\rulename{!_e}$ connects the modal typing system to the linear one, in
particular it makes explicit that terms with linear types must be used exactly
once. This is essential \eg for using terms whose type is linear multiples
times. Take for example the semiring of natural numbers and a sorted operation
symbol $f : \typeA \to \typeA$. A call to $f$ that is usable precisely
`$r$-times' is given by the judgement $y :\, \g{r} \typeA \vljud \,
\prog{pr}_{(r,[1])} \, y \, \prog{fr} \, x. \, f(\prog{dr} \, x) : \, \g{r}
\typeA$. Finally rules $\rulename{!_{{0}}}$ and $\rulename{!_{{n+m}}}$
correspond respectively to graded versions of weakening and contraction. They
can be seen intuitively as discard and copy operations where in the latter case
variables $x$ and $y$ are bound to the object $v$ being copied.

\begin{remark}
        When we instantiate $\mathcal{R}$ to the trivial semiring $(\{\infty\},
        \infty, \infty, +, \, \cdot)$, the rules in Figure~\ref{fig:lang} are
        the ones presented in~\cite{benton92} modulo the shuffling mechanism.
\end{remark}

\begin{figure*}[h!]
    \centering
    \scalebox{0.93}{
    \begin{tabular}{llllc}
    & & & & \\
    \multicolumn{4}{l}{
     \infer[\rulename{ax}]{E
        \vljud f(v_1,\dots,v_n) : \typeA}
      {\Gamma_i \vljud v_i : \typeA_i
      \quad f : \typeA_1,\dots,\typeA_n \to \typeA \in \Sigma
      \quad E \in \Shuff(\Gamma_1;\dots;\Gamma_n) }
      }
      &
      \infer[\rulename{hp}]{x : \typeA \vljud x : \typeA}{}
      \\
      & & & & \\
      \infer[\rulename{\typeI_i}]{- \vljud \ast : \typeI}{}
      &
      \multicolumn{4}{r}{
      \infer[\rulename{\typeI_e}]{E  \vljud v \ \prog{to}\ \ast.\ w
      : \typeA}
      {\Gamma \vljud v : \typeI \quad \Delta 
        \vljud w : \typeA \quad E \in \Shuff(\Gamma;\Delta)}
      }
      \\
     & & & & \\
      \multicolumn{3}{l}{
      \infer[\rulename{\otimes_i}]{E \vljud v \otimes w : \typeA \otimes
        \typeB}{\Gamma \vljud v : \typeA \quad \Delta \vljud w : \typeB
      \quad E \in \Shuff(\Gamma;\Delta)}
      }
      &
      \multicolumn{2}{r}{
      \infer[\rulename{\otimes_e}]{E \vljud \prog{pm}\ v\ \prog{to}\
      x \otimes y.\ w : \typeC}
      {\Gamma \vljud v : \typeA \otimes \typeB
      \quad \Delta , x : \typeA, y : \typeB \vljud w : \typeC
      \quad E \in \Shuff(\Gamma;\Delta)}
      }
      \\
      & & & & \\
      \multicolumn{3}{l}{
      \infer[\rulename{\multimap_i}]{\Gamma \vljud \lambda x : \typeA . \,
      v : \typeA \multimap \typeB}
      {\Gamma, x : \typeA \vljud v : \typeB}
 
      }
      &
      \multicolumn{2}{r}{
      \infer[\rulename{\multimap_e}]{E \vljud v \, w : \typeB}
      {\Gamma \vljud  v : \typeA \multimap \typeB \quad
      \Delta \vljud  w : \typeA \quad E \in \Shuff(\Gamma;\Delta)}      
     }
     \\ 
     \multicolumn{5}{c}{
        \dotfill 
     }
     \\
     & & & & \\
     \multicolumn{4}{c}{
             \infer[\rulename{!_i}]{
                     E \vljud \prog{pr}_{(r,[s_1,\dots,s_n])}\ v_1,\dots,v_n\ \prog{fr}\
                     x_1,\dots,x_n .\ u:\ \g{r} \typeA
             }{
                     \Gamma_i \vljud v_i :\ \g{r \cdot s_i} \typeA_i \quad
                     x_1 :\ \g{s_1}\typeA_1,\dots,x_n :\ \g{s_n} \typeA_n \vljud u : \typeA
                     \quad E \in \Shuff(\Gamma_1;\dots;\Gamma_n)  
             }
     }
     &
         \infer[\rulename{!_e}]{
               \Gamma \vljud \prog{dr}\, v : \typeA
       }{
               \Gamma \vljud v :\ \g{1} \typeA
       }

     \\
     &&&&
     \\
     \multicolumn{2}{l}{
             \infer[\rulename{{!_{0}}}]{
                     E \vljud \prog{ds}\, v .\ u : \typeB
             }
             {
                     \Gamma \vljud v :\ \g{0} \typeA \quad
                     \Delta \vljud u : \typeB \quad
                     E \in \Shuff(\Gamma; \Delta)
             }
     }
     &
     \multicolumn{3}{r}{
             \infer[\rulename{{!_{n+m}}}]{
                     E \vljud \prog{cp}_{(n,m)}\ v\ \prog{to}\ x,y.\ u : \typeB
             }{
                     \Gamma \vljud v :\ \g{n + m} \typeA \quad
                     \Delta, x :\ \g{n} \typeA, y :\ \g{m} \typeA \vljud
                     u : \typeB \quad E \in \Shuff(\Gamma;\Delta)
             }
     }
    \end{tabular}
        }
  \caption{Term formation rules of graded $\lambda$-calculus.}
  \label{fig:lang}
\end{figure*}

\noindent\textbf{Properties.} Our calculus has several desirable properties
(Theorem~\ref{thm:unique_typing} and Lemma~\ref{lem:fexch_subst}),
including the aforementioned fact that all judgements have a  unique derivation.
We start by presenting auxiliary notations.  Given a context $\Gamma$ we will
use $te(\Gamma)$ to denote  $\Gamma$ with all types erased.  Additionally, for
contexts $\Gamma$ and $\Gamma'$ we will use notation $\Gamma\perm \Gamma'$ to
state that $\Gamma$ is a permutation of $\Gamma'$. We will also use an
analogous notation for non-repetitive lists of untyped variables $te(\Gamma)$.
We will often abbreviate a judgement $\Gamma \vljud v : \typeA$ into $\Gamma
\vljud v$ or even just $v$ if no ambiguities arise.  Finally, we will often
denote a list of terms $v_1,\dots,v_n$ simply by $\vec{v}$ and analogously for
lists of variables.
\begin{proposition}
        \label{prop:shuff}
        Let us consider two lists of contexts $\Gamma_1,\dots, \Gamma_n$ and
        $\Gamma'_1,\dots,\Gamma'_n$, contexts $E$ and $ E'$, and suppose that
        $E \in \Shuff(\Gamma_1;\dots;\Gamma_n)$, $E' \in
        \Shuff(\Gamma'_1;\dots;\Gamma'_n)$. Then the following clauses hold: 
        \begin{enumerate}
                \item if $te(\Gamma_i)\perm te(\Gamma'_i)$ for all $i \leq n$
                        then $te(E)\perm te(E')$;
                \item if $\Gamma_i\perm\Gamma'_i$ for all $i \leq n$
                        then $E\perm E'$;
                \item if $E\perm E'$ and $te(\Gamma_i)\perm te(\Gamma'_i)$ 
                        for
                        some $i \leq n$ then $\Gamma_i\perm \Gamma'_i$;
                \item if $E = E'$ and $te(\Gamma_i)\perm te(\Gamma'_i)$  for some
                        $i \leq n$ then $\Gamma_i = \Gamma'_i$.
        \end{enumerate}

\end{proposition}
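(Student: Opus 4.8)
The plan is to reduce all four clauses to elementary bookkeeping about the list of typed variables occurring in a shuffle, using only the defining property that a shuffle preserves the relative order of the variables within each component context. Throughout I rely on the standing convention that $\Gamma_1,\dots,\Gamma_n$ (and likewise $\Gamma'_1,\dots,\Gamma'_n$) are pairwise disjoint, so that every variable name occurring in $E$ (resp.\ $E'$) is contributed by a unique component $\Gamma_i$ (resp.\ $\Gamma'_i$). The basic fact I would isolate first, call it $(\star)$, is this: if $E \in \Shuff(\Gamma_1;\dots;\Gamma_n)$ then $E$, read as a list of typed variables, contains exactly the typed variables of $\Gamma_1,\dots,\Gamma_n$, each exactly once; consequently $te(E)$ contains exactly the variables of $te(\Gamma_1),\dots,te(\Gamma_n)$, each once, and the relative order in $E$ of the variables coming from $\Gamma_i$ agrees with their order in $\Gamma_i$ (the same holds with types retained).

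For clause (1), I would argue that by $(\star)$ the underlying set of variable names of $te(E)$ is the union of those of the $te(\Gamma_i)$, and similarly for $te(E')$; the hypothesis $te(\Gamma_i) \perm te(\Gamma'_i)$ makes these two unions coincide, so $te(E)$ and $te(E')$ are repetition-free lists over one and the same finite set, hence $te(E) \perm te(E')$. Clause (2) is the identical argument carried out with types kept: $(\star)$ together with $\Gamma_i \perm \Gamma'_i$ for all $i$ shows that $E$ and $E'$ are built from the same typed variables, so $E \perm E'$.

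Clause (3) is where disjointness does genuine work. Fix $i$ with $te(\Gamma_i) \perm te(\Gamma'_i)$, and take any variable $x$ occurring in $\Gamma_i$, say with type $\typeA$; by $(\star)$ it occurs in $E$ with type $\typeA$, and hence in $E'$ with type $\typeA$ since $E \perm E'$. On the other hand $x$ occurs in $te(\Gamma'_i)$, so it occurs in $\Gamma'_i$ with some type $\typeA'$, and by $(\star)$ applied to $E'$ it occurs in $E'$ with type $\typeA'$; since a variable occurs at most once in $E'$ we conclude $\typeA = \typeA'$. Letting $x$ range over the variables of $\Gamma_i$ yields $\Gamma_i \perm \Gamma'_i$. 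Clause (4) refines this: by (3) the contexts $\Gamma_i$ and $\Gamma'_i$ already carry the same typed variables, so only the order remains to be checked; by $(\star)$ these variables occur in $E$ in the order they have in $\Gamma_i$, and they occur in $E' = E$ in the order they have in $\Gamma'_i$, so restricting the linear order of $E$ to this common variable set recovers both orders, giving $\Gamma_i = \Gamma'_i$.

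I do not expect a real obstacle here: the argument is entirely combinatorial. The only point requiring care — and the reason clauses (3) and (4) are not completely trivial — is the pairwise-disjointness convention, which is precisely what licenses reading the type of a variable of $\Gamma_i$ off $E$ unambiguously; without it the correspondence between the components of the two shuffles would break down.
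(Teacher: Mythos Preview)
Your argument is correct and is exactly the routine combinatorics the paper has in mind; the paper itself does not spell out a proof but simply declares the proposition ``straightforward'', so there is nothing to compare beyond noting that your $(\star)$ and the disjointness observation are precisely the ingredients that make it so.
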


\begin{theorem}
        \label{thm:unique_typing}
        Graded $\lambda$-calculus has the following properties:
        \begin{enumerate}
                \item for all judgements $\Gamma \vljud v$ and $\Gamma'
                        \vljud v$ we have $te(\Gamma)\perm
                        te(\Gamma')$;
                \item additionally if $\Gamma \vljud v : \typeA$, $\Gamma' \vljud v:
                        \typeA'$, and $\Gamma\perm \Gamma'$ then
                        $\typeA$ must be equal to $\typeA'$;
                \item all judgements $\Gamma \vljud v : \typeA$ have a 
                        unique
                        derivation.
        \end{enumerate}
 
\end{theorem}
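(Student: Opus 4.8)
The plan is to prove all three statements simultaneously by structural induction on derivations, since they are tightly intertwined: uniqueness of the derivation (statement 3) will follow once we know that the typing context determines enough of the term's shape, which in turn relies on statements 1 and 2. First I would prove statement 1 (the erased contexts of any two derivations of the same raw term agree up to permutation) by induction on the structure of the term $v$. Each term constructor in Figure~\ref{fig:lang} records in its syntax exactly which shuffle was used — e.g.\ $\prog{pm}\ v\ \prog{to}\ x\otimes y.\ w$ forces the splitting of the ambient context into the part feeding $v$ and the part feeding $w$, and the shuffle $E$ determines the interleaving — so the free variables of $v$ are partitioned in a way dictated by the term itself; an easy bookkeeping argument then shows any two contexts typing $v$ have the same underlying set of variables, hence are permutations of one another once we also invoke the "each variable occurs at most once" discipline.

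Second, for statement 2 I would again induct on $v$, but now assuming $\Gamma \perm \Gamma'$. The key point is that each syntactic constructor has a \emph{unique} matching rule in Figure~\ref{fig:lang}, and the premises of that rule reconstruct the types of the subterms from the conclusion's context together with the annotations carried by the term (the grades $r, s_1,\dots,s_n$ in $\prog{pr}$, the split sizes $n,m$ in $\prog{cp}$, the domain annotation $\typeA$ in $\lambda x:\typeA.v$, the operation symbol's arity in $\rulename{ax}$, etc.). So from $\Gamma\perm\Gamma'$ one pushes the permutation down to the subterms — here one uses statement 1 to guarantee that the induced sub-contexts on both sides really are permutations of each other — applies the induction hypothesis to conclude the subterms receive equal types, and reads off that $\typeA = \typeA'$. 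The shuffle annotation is what makes the context-splitting in rules like $\rulename{\otimes_i}$, $\rulename{\multimap_e}$, $\rulename{!_i}$ deterministic modulo permutation, which is exactly the delicate spot.

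Third, statement 3: given $\Gamma \vljud v : \typeA$, I would show by induction on $v$ that the last rule of the derivation is forced (the outermost constructor of $v$ picks it out uniquely), that its instantiation is forced (the term's annotations plus $\Gamma$ and $\typeA$ pin down the grades, the split of $\Gamma$ into the premise contexts, and the shuffle $E$ — since $E$ is literally part of the term), and that the types occurring in the premises are forced by statement 2; the induction hypothesis then gives uniqueness of the sub-derivations, hence of the whole derivation.

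The main obstacle I anticipate is handling the context-splitting rules correctly: in $\rulename{\otimes_i}$, $\rulename{\typeI_e}$, $\rulename{\multimap_e}$, and especially $\rulename{!_i}$ and $\rulename{!_{n+m}}$, one must argue that the way the ambient context $E$ decomposes into $\Gamma_1,\dots,\Gamma_n$ (or $\Gamma,\Delta$) is determined. This is precisely where the shuffling mechanism earns its keep: because $E$ is recorded in the term and a shuffle preserves the internal order of each $\Gamma_i$, knowing which free variables of $v$ come from which subterm (statement 1 applied to the subterms) together with the recorded $E$ recovers each $\Gamma_i$ on the nose — not merely up to permutation. I would isolate this as a small lemma about shuffles ("a shuffle of $\Gamma_1;\dots;\Gamma_n$ together with a labelling of its entries by which block they came from determines the $\Gamma_i$") and then the three inductions go through routinely. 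A secondary subtlety is the bound-variable side-conditions (the disjointness convention and the "occurs at most once" constraint), which must be threaded through the induction so that, e.g., in $\rulename{\otimes_e}$ the fresh $x,y$ are genuinely fresh on both derivations; this is routine but needs to be stated.
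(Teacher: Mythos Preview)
Your proposal is correct and follows essentially the same approach as the paper: three sequential inductions over term structure, with the shuffle-determinism lemma you isolate corresponding exactly to the paper's Proposition~\ref{prop:shuff} (whose four clauses supply precisely the bookkeeping facts needed at each stage). One minor slip in wording: $E$ is not ``literally part of the term'' syntax---it is the given context $\Gamma$ of the judgement---but since statement~3 concerns a fixed judgement this is what you actually need and the argument goes through.
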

\begin{proof}
        The first clause follows straightforwardly from induction over the
        derivation system (Figure~\ref{fig:lang}) and the first clause of
        Proposition~\ref{prop:shuff}. The second clause follows from induction
        over the derivation system, the first clause, the second and third
        clauses of Proposition~\ref{prop:shuff}, the grade annotations in term
        constructs, and the type annotation in the $\lambda$-construct.  The
        third clause follows from induction over the derivation system, the
        second clause, the shuffling mechanism, and the fourth clause of
        Proposition~\ref{prop:shuff}.
\end{proof}

Substitution is defined in the expected way and as usual uses
$\alpha$-equivalence to avoid capturing free variables. In our setting such
captures arise from the rules $\rulename{\multimap_i}$, $\rulename{\otimes_e}$,
$\rulename{!_i}$, and $\rulename{!_{{n+m}}}$.

\begin{lemma}[Exchange and Substitution]
  \label{lem:fexch_subst}
  For every judgement
  $\Gamma, x : \typeA, y : \typeB, \Delta \vljud v : \typeC$ we can derive
  $\Gamma, y : \typeB, x : \typeA, \Delta \vljud v : \typeC$.
  For all judgements
  $\Gamma, x : \typeA \vljud v : \typeB$ and
  $\Delta \vljud w : \typeA$ we can derive
  $\Gamma,\Delta \vljud v[w/x] : \typeB$.
\end{lemma}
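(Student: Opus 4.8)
The plan is to prove the two statements in turn --- \emph{Exchange} first and then \emph{Substitution} using it --- each by structural induction on the subject term $v$. This is legitimate because the outermost constructor of $v$ determines which rule of Figure~\ref{fig:lang} was applied last (a fact underlying Theorem~\ref{thm:unique_typing}), its premises concern strict subterms of $v$, and the manipulations we perform leave $\abs{v}$ unchanged even when they rewrite a subderivation.

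For Exchange, suppose the last rule in the derivation of $\Gamma, x:\typeA, y:\typeB, \Delta \vljud v : \typeC$ is a multi-premise rule combining contexts $\Gamma_1,\dots,\Gamma_n$ through a shuffle $E = \Gamma, x:\typeA, y:\typeB, \Delta \in \Shuff(\Gamma_1;\dots;\Gamma_n)$ (the rules $\rulename{hp}$, $\rulename{\typeI_i}$ are vacuous, and $\rulename{\multimap_i}$, $\rulename{!_e}$ follow immediately from the induction hypothesis applied to their single premise). A shuffle is an interleaving, so $x$ belongs to exactly one component $\Gamma_i$ and $y$ to exactly one $\Gamma_j$. If $i\neq j$, transposing $x$ and $y$ does not disturb the internal order of any component, so $\Gamma, y:\typeB, x:\typeA, \Delta$ is again a shuffle of $\Gamma_1;\dots;\Gamma_n$ and re-applying the same rule, with unchanged premises, suffices. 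If $i=j$, then since $x,y$ are adjacent in $E$ and $E$ respects the internal order of $\Gamma_i$, the variables $x,y$ are adjacent inside $\Gamma_i$ (and in the binding rules $\rulename{\otimes_e}$, $\rulename{!_i}$, $\rulename{!_{n+m}}$, $\rulename{!_0}$, $\rulename{\typeI_e}$ the bound variables sit strictly to the right of $\Gamma_i$, so this persists in the premise); we invoke the induction hypothesis on the premise over $\Gamma_i$ to transpose them there, and re-apply the rule with the correspondingly updated shuffle. In particular every permutation of a context becomes admissible, since adjacent transpositions generate all of them.

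For Substitution, induct on $v$ in the derivation of $\Gamma, x:\typeA \vljud v : \typeB$. The base case $\rulename{hp}$ forces $\Gamma$ empty, $v=x$, $\typeB = \typeA$ and $v[w/x]=w$, so the hypothesis $\Delta \vljud w : \typeA$ already gives the claim. For every other rule, the disjointness convention guarantees that $x$ is free in the context of exactly one premise; moreover, since $x$ is the rightmost variable of the conclusion context $E = \Gamma, x:\typeA$, it is also rightmost in the unique component $\Gamma_i$ carrying it, say $\Gamma_i = \Gamma_i', x:\typeA$. Using Exchange to move $x$ past any bound variables lying to its right in that premise, the induction hypothesis yields a derivation of $\Gamma_i', \Delta \vljud v_i[w/x] : \typeA_i$, the other premises being untouched as $x$ is not free in them; in $\rulename{\multimap_i}$, $\rulename{\otimes_e}$, $\rulename{!_i}$, $\rulename{!_{n+m}}$ one first $\alpha$-renames the bound variables away from the free variables of $w$ and uses Exchange once more to return them to the tail of the context, so the original rule becomes applicable. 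It then remains only to check that $\Gamma,\Delta$ is a shuffle of $\Gamma_1;\dots;\Gamma_i',\Delta;\dots;\Gamma_n$: this holds because the elements of $\Gamma_i'$ retain their relative order within $\Gamma$, the whole of $\Delta$ is placed at the extreme right --- hence after all of $\Gamma_i'$, as the internal order of the component $\Gamma_i',\Delta$ demands --- and no other component is modified. Re-applying the rule with this shuffle produces $\Gamma,\Delta \vljud v[w/x] : \typeB$.

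The only genuine work is the shuffle bookkeeping: confirming, uniformly across the multi-premise rules, that the transposition (respectively the splicing of $\Delta$ into the component carrying $x$) again yields a legal shuffle, and orchestrating the auxiliary uses of Exchange needed to park bound variables at the tail of a context. The invariant that makes this painless is the implication ``rightmost in $E$'' $\Rightarrow$ ``rightmost in $\Gamma_i$'', which is precisely what licenses appending $\Delta$ to $\Gamma_i'$; I expect getting that invariant, together with the $\alpha$-renaming side conditions, exactly right to be the main --- if modest --- obstacle.
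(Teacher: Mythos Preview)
Your proposal is correct and follows the same approach as the paper's proof, which is a terse two-sentence sketch saying only that Exchange is by induction over the derivation system and that Substitution follows from Exchange, linearity of $x$ in $v$, and induction on the derivation of $\Gamma, x:\typeA \vljud v : \typeB$. You have simply spelled out the shuffle bookkeeping and the handling of bound variables that the paper leaves implicit.
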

\begin{proof}
        As usual the exchange property follows from induction over the
        derivation system in Figure~\ref{fig:lang}. The substitution property
        follows from the exchange property, the fact that $x$ occurs at most
        once in the term $v$, and from induction over the judgement derivation
        $\Gamma, x : \typeA \vljud v : \typeB$.
\end{proof}

The substitution property proved in Lemma~\ref{lem:fexch_subst} generalises to
iterated substitution. More specifically, given $\Gamma, x_1 : \typeA_1, \dots,
x_n : \typeA_n \vljud v : \typeB$  and $\Delta_i \vljud w_i : \typeA_i$ $(i
\leq n)$ with all contexts involved pairwise disjoint one easily derives
$\Gamma, \Delta_1, \dots, \Delta_n \vljud v[w_1/x_1] \dots [w_n/x_n] : \typeB$.
Additionally it is straightforward to prove that, by virtue of all contexts
being pairwise disjoint, the order in which the sequence of substitutions
occurs is irrelevant. For this reason we will often abbreviate $v[w_1/x_1]
\dots [w_n/x_n]$ simply to $v[\vec{w}/\vec{x}]$ or $v[w_1/x_1,\dots,w_n/x_n]$.

\begin{remark}
The promotion rule $\rulename{!_i}$ of our graded calculus differs from the
promotion rule of previous calculi with graded
modalities~\cite{brunel14,gaboardi16}. Let us explain this distinction and
justify it. Let $\vec{s}$ denote a list of grades $s_1,\dots,s_n$ and $r\cdot
\vec{s}$ denote the list of grades $r \cdot s_1, \dots, r\cdot s_n$. If we
write $!_{\vec{s}}\ \Gamma$ to say that the type of every variable $x_i$ in
$\Gamma$ is of the form $!_{s_i}\, \typeA_i$, then for every judgement
$!_{\vec{s}}\ \Gamma \vljud v : \typeA$ with $te(\Gamma) = x_1,\dots,x_n$ we
can derive $!_{r\cdot \vec{s}}\ \Gamma \vljud \prog{pr}_{(r,\vec{s})}\ \vec{x}
\ \prog{fr}\ \vec{y}.\, v[\vec{y}/\vec{x}] :\, !_r\, \typeA$ -- we abbreviate
the latter term simply to $\g{r} v$. The following rule is then 
admissible in our calculus:
\[
        \infer[]{!_{r \cdot
                \vec{s}}\ \Gamma \vljud \> \g{r} v : \, !_r\, \typeA}{!_{\vec{s}}\ 
        \Gamma \vljud v : \typeA}
\]
A rule with the same structural format is added \emph{natively} to the calculi
in~\cite{brunel14,gaboardi16} and is the counterpart to our promotion rule
$\rulename{!_i}$. The former however breaks the substitution property stated in
Lemma~\ref{lem:fexch_subst} (details available in~\cite[page 10]{benton92}).
This would hinder the development of our equational system and associated
completeness result and justifies the slightly more complicated rule
$\rulename{!_i}$. 
\end{remark}

\noindent\textbf{Equational system.} Figure~\ref{fig:eqs} presents the equational schema of graded
$\lambda$-calculus. As usual, we omit the typing information of the
equations-in-context listed in Figure~\ref{fig:eqs} which can be recovered
uniquely up to permutations. The symbols $(:)$ and $(\conc)$ denote usual
operations on lists namely cons and concatenation. Note as well the division of
the equational schema into different sections referring to specific categorical
machinery.  This is to attach a semantic intuition to the equations and to
foreshadow the categorical structures that will be used later on to interpret
graded $\lambda$-calculus.  The equations concerning the monoidal structure and
the closed structure were already discussed elsewhere~(\eg
\cite{benton92,dahlqvist22}).  The equations concerning commuting conversions
enforce the fact that certain expressions differing in scope such as
$(\prog{ds} \, v. \, u) \otimes w$ and $\prog{ds} \, v. \, (u \otimes w)$ are
intended to have the same meaning.

Next, in the axiomatisation of the comonadic structure, the first and second
equations  are respectively $\beta$ and $\eta$ equations and embody the counit
laws associated to the underlying graded comonad. The third equation states
that the inner promotion (on the left-hand side) can be pushed-forward to $w$
but with the factor $r_1$ discarded as a result from not being bound to
variable $a$ anymore. This equation embodies the associativity law  of the
underlying graded comonad.  Observe that for these three equations to be
well-defined the reduct $(R,1,\, \cdot \,)$ in the semiring $\mathcal{R}$ needs
to be a monoid (which we assumed previously).  The fourth equation tells that
the order in which terms $\vec{v}$  appear in a promotion
$\prog{pr}_{(r,\vec{s})} \, \vec{v} \, \prog{fr} \, \vec{x} .\,  u $ is
irrelevant, which fact embodies the symmetry of the graded comonad.

The discard (\ie weakening) and copy (\ie contraction) operations suggest a
(graded) commutative comonoidal structure, which is reflected in the four
corresponding equations in Figure~\ref{fig:eqs}. This time, these equations
force the reduct $(R,0,\, + \,)$ in the semiring $\mathcal{R}$ to be a
commutative monoid (which indeed we also assumed previously).  In the
axiomatisation of the interaction between the underlying comonoid and comonad,
the first two equations can be seen as a mechanism for shifting term complexity
between the discard and promotion expressions (this is noticeable by looking at
the grade annotations in the promotions, when present). They may equally well
be regarded respectively as $\beta$ and $\eta$-equations whose corresponding
reduction simplifies the promotion expression. Semantically they reflect the
naturality of the discard operation, that the latter is a graded version of a
coalgebra morphism, and that the comonad's comultiplication is a comonoid
morphism (we formally detail this later on).  Note as well that these equations
force $0$ to be an absorbing element of the monoid operation $(\cdot)$ in the
semiring $\mathcal{R}$ (which indeed we assumed previously). The last two
equations follow a reasoning analogous to the previous two, and force $(\cdot)$
to distribute over $(+)$ both on the left and the right (which we also
assumed). The equations described thus entail that $\mathcal{R}$ has a semiring
structure as previously postulated.  

\begin{remark}
This equational schema is a graded generalisation of the one
presented in~\cite{benton92}. In fact, for the particular case of the singleton
semiring $\mathcal{R} = \{\{\infty\},\infty,\infty,+,\, \cdot\}$ our equations
collapse to those in~\cite{benton92} except for the equation about the
comonad's symmetry which is absent from~\emph{op.\ cit.}
\end{remark}

\begin{figure*}[h!]
\captionsetup{width=\linewidth}
        \centering
        \scalebox{0.92}{
	\begin{tabular}{| c | c |}
		\hline
		Monoidal structure
		&
		Closed structure
                \\
		\hline
		\begin{tabular}{r c l}
	           $\prog{pm}\ v \otimes w\ \prog{to}\ x \otimes y.\ u$ &
	           $=$ & $u[v/x,w/y]$ 
                   \\[3pt]
	           $\prog{pm}\ v\ \prog{to}\ x \otimes y.\
	           u[x \otimes y / z]$ &
	           $=$ & $u[v/z]$ 
                   \\[3pt]
	           $\ast\ \prog{to}\ \ast.\ v$ & $=$ & $v$ \\
	           $v\ \prog{to}\ \ast.\ w[\ast/ z]$ & $=$ & $w[v/z]$
                   \\[2pt]
	 \end{tabular}
  	 &
	 \begin{tabular}{r c l}
                      $(\lambda x : \typeA .\ v)\ w$ &$=$& $v[w/x]$ 
                      \\[3pt]
                      $\lambda x : \typeA .\ v\ x$ &$=$& $v$
         \end{tabular}
	 \\
         \hline
         \multicolumn{2}{|c|}{Symmetric comonadic structure} 
         \\
         \hline
         \multicolumn{2}{|c|}{
         \begin{tabular}{r c l}
                $\prog{dr} \, \,
                \prog{pr}_{(1,\vec{s})}\ \vec{v}\ \prog{fr}\ \vec{x}.\ u
                $
                &$=$& $u[\vec{v}/\vec{x}]$ 
                \\[3pt]
                 $\prog{pr}_{(r,[1])}\ z\ \prog{fr}\ x.\ \prog{dr}\, x$ & $=$ & $z$
                \\[3pt] 
                $\prog{pr}_{(r_1,r_2 : \vec{r})}\  (\prog{pr}_{(r_1 \cdot r_2,\vec{s})}\
                \vec{x}\ \prog{fr}\ \vec{y}.\ v  ), \vec{z}\ \prog{fr}\ a,\vec{a}.\ w$
                & $=$ &
                $\prog{pr}_{(r_1, (r_2 \cdot \vec{s}) \conc \vec{r})}\ \vec{x},\vec{z}\
                \prog{fr}\ \vec{c},\vec{a}.\ w [\prog{pr}_{(r_2,\vec{s})}\ \vec{c}\
                \prog{fr}\ \vec{y} .\ v /a ]$ 
                \\[3pt]
                $\prog{pr}_{(r,\vec{s_1}\conc [r_1,r_2]
                \conc \vec{s_2})}\ \vec{v}_1,w_1,w_2,
                \vec{v}_2\ \prog{fr}\ \vec{x}_1,y_1,y_2,\vec{x}_2.\ u$
                &$=$& $\prog{pr}_{(r,\vec{s_1}\conc [r_2,r_1] \conc 
                \vec{s_2})}\ \vec{v}_1,w_2,w_1,
                \vec{v}_2\ \prog{fr}\ \vec{x}_1,y_2,y_1,\vec{x}_2.\ u$
                \\[3pt]
        \end{tabular}
        }\\[3pt] 
        \hline
        \multicolumn{2}{|c|}{Commutative comonoid structure} 
        \\
        \hline
        \multicolumn{2}{|c|}{
         \begin{tabular}{r c l}
                 $\prog{cp}_{(0,n)}\, v\, \prog{to}\, x,y.\, \prog{ds}\, x.\, u$
                &$=$& $u[v/y]$ 
                \\[3pt]
                $\prog{cp}_{(n,0)}\, v\, \prog{to}\, x,y.\, \prog{ds}\, y.\, u$ & $=$ & 
                $u[v/x]$
                \\[3pt] 
                $\prog{cp}_{(n+m,o)}\, v\, \prog{to}\, x,y.\,
                        \prog{cp}_{(n,m)}\, x \, \prog{to} \, a,b.\, u$
                & $=$ &
                $\prog{cp}_{(n,m+o)} \, v \, \prog{to} \, a,c .\, 
                        \prog{cp}_{(m,o)} \, c \, \prog{to}\, b,y.\, u$ 
                \\[3pt]
                $\prog{cp}_{(n,m)}\, v\, \prog{to}\, x,y.\, u$
                &$=$& $\prog{cp}_{(m,n)}\, v\, \prog{to}\, y,x.\, u$
                \\[3pt]
        \end{tabular}
        }\\[3pt] 
        \hline
        \multicolumn{2}{|c|}{Interaction between comonoid and comonad} 
        \\
        \hline
        \multicolumn{2}{|c|}{
         \begin{tabular}{r c l}
                 $\prog{ds}\, \, \prog{pr}_{(0,\vec{s})}\, \vec{v} \, 
                         \prog{fr} \, \vec{x}.\ w .\ u$
                &$=$& $\prog{ds}\, v_1 .\, \dots \, \prog{ds}\, v_n.\, u$ 
                \\[3pt]
                 $\prog{pr}_{(r,0:\vec{s})}\, v, \vec{v}\, 
                 \prog{fr}\, x,\vec{x}.\, \prog{ds} \, x.\, u$ & $=$ & 
                 $\prog{ds}\, v.\, \prog{pr}_{(r,\vec{s})}\, \vec{v}\, \prog{fr} \,
                 \vec{x}. \, u$
                \\[3pt] 
                $\prog{cp}_{(n,m)} \, \prog{pr}_{(n+m,[s_1,\ldots,s_k])} \, \vec{v} \,
                \prog{fr} \, \vec{x}.\, w  \, \prog{to} \, y,z.\, u$
                & $=$ &
                $\prog{cp}_{(n \cdot s_1, m \cdot s_1)} \, v_1 \, \prog{to} \, a_1,b_1. \,
                \dots \, \prog{cp}_{(n \cdot s_k, m \cdot s_k)} \, v_k \, \prog{to} 
                \, a_k,b_k. \, $ 
                \\
                & & \hspace{0.2cm} 
                $u[\prog{pr}_{(n,[s_1,\ldots,s_k])} \, \vec{a} \, \prog{fr}\, \vec{x} \, 
                .\, w /y,\prog{pr}_{(m,[s_1,\ldots,s_k])} \, \vec{b} \, \prog{fr}\, \vec{x} \, 
                .\, w /z]$ 
                \\[3pt]
                $\prog{pr}_{(r,(n+m):\vec{s})} \, v,\vec{v} \, \prog{fr} \, z,\vec{z}. \,
                \prog{cp}_{(n,m)} \, z \, \prog{to} \, x,y. \, u$
                &$=$& $\prog{cp}_{(r\cdot n, r \cdot m)} \, v \, \prog{to} \, a,b. \,
                        \prog{pr}_{(r,n:m:\vec{s})} a,b, \vec{v} \, \prog{fr} \, x,y,\vec{z}
                        . \, u$
                \\[3pt]
        \end{tabular}
        }
        \\[3pt]
                \hline
                \multicolumn{2}{|c|}{Commuting conversions} 
                \\
                \hline
                \multicolumn{2}{|c|}{
                \begin{tabular}{r c l}
                        $u[v\ \prog{to} \ast.\ w/z]$ &$=$& $v\ \prog{to}\ \ast.\ u[w/z]$ 
                        \\[3pt]
	                $u[\prog{pm}\ v\ \prog{to}\ x \otimes y.\ w/z]$
	                &$=$& $\prog{pm}\ v\ \prog{to}\ x \otimes y.\ u[w/z]$
                        \\[2pt]
                        $u[\prog{ds}\ v.\, w/z]$ &$=$& $\prog{ds}\ v.\ u[w/z]$ 
                        \\[3pt]
                        $u[\prog{cp}_{(n,m)}\ v\ \prog{to}\ x,y.\ w/z]$
                        &$=$& $\prog{cp}_{(n,m)}\ v\ \prog{to}\ x, y.\ u[w/z]$
                        \\[2pt]
                \end{tabular}
                }\\
	  	\hline
	\end{tabular}
        }
	\caption{Equational schema of graded $\lambda$-calculus.}
  \label{fig:eqs}
\end{figure*}

\subsection{The interpretation}

In this subsection we present an interpretation of the graded calculus detailed
above. The interpretation uses the categorical machinery suggested
in~\cite{gaboardi16,katsumata18,orchard2019quantitative} to interpret previous
graded calculi.  We also prove that the equational schema in
Figure~\ref{fig:eqs} is sound w.r.t.\ this interpretation.

We start by recalling preliminary categorical notions and some conventions
concerning symmetric monoidal closed (\ie autonomous) categories.  Given one
such category $\catC$ and for a list of $\catC$-objects $X_1,\dots,X_n$ we
write $X_1 \otimes \dots \otimes X_n$ for the $n$-tensor $(\dots (X_1 \otimes
X_2) \otimes \dots ) \otimes X_n$ and similarly for morphisms. For
all $\catC$-objects $X,Y,Z$, $\sw : X \otimes Y \to Y \otimes X$ denotes the
symmetry morphism, $\lambda : I \otimes X \to X$ the left unitor, $\app : (X
\multimap Y) \otimes X \to Y$ the application morphism, and $\alpha : X \otimes
(Y \otimes Z) \to (X \otimes Y) \otimes Z$ the left associator.  For
all $\catC$-morphisms $f : X \otimes Y \to Z$ we denote the corresponding
curried version by $\overline{f} : X \to (Y \multimap Z)$. We will frequently
omit subscripts in natural transformations. For a monoidal functor $\funF
: \catC \to \catC$  we denote by $\phi : I \to \funF I$ and $\phi_{X,Y} : \funF
X \otimes \funF Y \to \funF (X \otimes Y)$ the corresponding monoidal
operations.  Similarly given $\catC$-objects $X_1,\dots,X_n$ we denote by
$\phi_{X_1,\dots,X_n} : \funF X_1 \otimes \dots \otimes \funF X_n \to \funF(X_1
\otimes \dots \otimes X_n)$ the morphism defined recursively on the size of
$n$ by:
\[
        \phi_{-} = \phi \hspace{2.7cm}
        \phi_{X} = \id \hspace{2.7cm}
        \phi_{X_1,\dots,X_n,X_{n+1}} = 
        \phi_{{(X_1 \otimes \dots \otimes X_n)},X_{n+1}} \comp
        (\phi_{X_1, \dots, X_n} \otimes \id).
\]
In the presence of several monoidal functors $F,G$, we denote their respective
monoidal operations by $\phi^F, \phi^G$.

We now set the ground for the notion of a graded exponential comonad,
explored for example in~\cite{gaboardi16,katsumata18,orchard2019quantitative} and
standardly used for interpreting graded modal types. Note first that a 
semiring $\mathcal{R}=(R,0,1,+,\cdot)$ has two (interacting) monoidal
structures: $(R,0,+)$ (which is commutative) and $(R,1,\cdot)$ (which need not
be).  The category $\EndoC$ of endofunctors and natural transformations also
has two monoidal structures, specifically $(\EndoC,I,\otimes)$ (where $I$
designates to constant functor to the unit) and $(\EndoC,\Id,\circ)$. The
category $\MonEndoC$ (\resp $\SMEndoC$) of \emph{monoidal} (\resp
\emph{symmetric monoidal}) endofunctors and \emph{monoidal} natural
transformations inherits these two monoidal structures from $\EndoC$.  The
semantics of our graded $\lambda$-calculus relies on a `representation' of
$\mathcal{R}$ in $\catC$ using these two structures, as detailed below.
\begin{definition}
        \label{defn:graded_com}
        An $\mathcal{R}$-\emph{graded comonad} over a (not necessarily
        monoidal) category  $\catC$ is an oplax monoidal functor $D:
        (R,1,\cdot)\to (\EndoC,\Id,\circ)$. Similarly,  an $\mathcal{R}$-graded
        \emph{monoidal} comonad is an oplax monoidal functor $D: (R,1,\cdot)\to
        (\MonEndoC,\Id,\circ)$,  and an $\mathcal{R}$-graded \emph{symmetric
        monoidal} comonad is an oplax monoidal functor $D: (R,1,\cdot)\to
        (\SMEndoC,\Id, \circ)$.  Concretely,  an $\mathcal{R}$-graded comonad
        is a triple $(D_{(-)} : R \to [\catC,\catC], \epsilon : D_1 \to \Id,
        \delta^{m,n} : D_{m \cdot n} \to D_m D_n)$ that makes the following
        diagrams commute
        \begin{align}\label{eq:gradedcomonad}
                \xymatrix@C=50pt@R=15pt{
                        D_s \ar@{=}[dr] \ar[r]^{\delta^{s,1}} \ar[d]_{\delta^{1,s}} 
                        & D_s D_1 \ar[d]^{D_s \epsilon}
                        \\
                        D_1 D_s \ar[r]_{\epsilon_{D_s}} & D_s
                }
                \hspace{2cm}
                \xymatrix@C=50pt@R=15pt{
                        D_{s_1 \cdot s_2 \cdot s_3} 
                        \ar[r]^{\delta^{s_1, s_2 \cdot s_3}} 
                        \ar[d]_{\delta^{s_1 \cdot s_2, s_3}} &  
                        D_{s_1} D_{s_2 \cdot s_3}
                        \ar[d]^{D_{s_1} \delta^{s_2,s_3}}
                        \\
                        D_{s_1 \cdot s_2} D_{s_3} 
                        \ar[r]_{{\delta^{s_1,s_2}}_{D_{s_3}}}
                        & 
                        D_{s_1} D_{s_2} D_{s_3} 
                }
        \end{align}
        and similarly for an $\mathcal{R}$-graded monoidal and symmetric
        monoidal comonad.
\end{definition}
\begin{definition}
        \label{defn:graded}
        An $\mathcal{R}$-\emph{graded exponential comonad} is an
        $\mathcal{R}$-graded symmetric monoidal comonad $D: (R,1,\cdot)\to
        (\SMEndoC,\Id,\circ)$ that satisfies the following additional properties:
        \begin{enumerate}
        \item  $D$ is an oplax symmetric monoidal functor $D: (R,0,+)\to
                (\SMEndoC,I,\otimes)$.  In other words,  we have monoidal
                natural transformations $e : D_0 \to I$ and $d^{m,n} : D_{m +
                n} \to D_m \otimes D_n$ making the analogues of
                \eqref{eq:gradedcomonad} for the monoidal structure
                $(\EndoC,I,\otimes)$ commute. Note that since $(R,0,+)$ is
                commutative and $D$ is symmetric the diagram
                below commutes as well.
	        \[
					\xymatrix{
						D_{m+n}\ar@{=}[r]\ar[d]_{d^{m,n}}  & D_{n+m}\ar[d]^{d^{n,m}}  \\ 
						D_m\otimes D_n\ar[r]_{\gamma} & D_n\otimes D_m
						}
	        \]
                This equips every $\catC$-object with the structure of a graded
                commutative comonoid~\cite{fujii16}.
                \item The two oplax monoidal structures of $D$ interact
                        as specified by the diagrams below (where the transformations $\phi^{D_n}$ and $\phi^{D_s}_{-,-}$ are available by virtue of the typing of $D$).
                \[
                \xymatrix@C=45pt{
                        D_{n \cdot 0} \ar[d]_{e} \ar[r]^{\delta^{n,0}} & 
                        D_n D_0 
                        \ar[d]^{D_n e}
                        \\
                        I \ar[r]_{\phi^{D_n}} & D_n I
                }
                \hspace{1.25cm}
                \xymatrix@C=45pt{
                        D_{0 \cdot s} \ar[r]^{\delta^{0,s}}
                        \ar[d]_{e} & D_0 D_s \ar[d]^{e_{D_s}} \\
                        I \ar@{=}[r] & I
                }
            	\hspace{1.25cm}
                \xymatrix@C=62pt{
                        D_{(n+m)\cdot s} \ar[r]^{\delta^{n+m,s}}
                        \ar[d]_{d^{n \cdot s, m \cdot s}} 
                        & D_{n+m} D_s \ar[d]^{d^{n,m}_{D_s}} \\
                        D_{n\cdot s} \otimes D_{m\cdot s} 
                        \ar[r]_(0.47){\delta^{n,s} \otimes \, \delta^{m,s}}
                        & D_n D_s \otimes D_m D_s
                }
        \]
        \[
                \xymatrix@C=55pt{
                        D_{s \cdot (n+m)} \ar[rr]^{\delta^{s, (n+m)}}
                        \ar[d]_{d^{(s \cdot n) + (s \cdot m)}}
                        & & D_{s} D_{n + m}
                        \ar[d]^{D_s d^{n,m}}
                        \\
                        D_{s \cdot n} \otimes D_{s \cdot m}
                        \ar[r]_(0.47){\delta^{s,n} \otimes \, \delta^{s,m}}
                        &
                        D_s D_n \otimes D_s D_m
                        \ar[r]_{\phi^{D_s}_{D_n,D_m}}
                        &
                        D_s (D_n \otimes D_m)
                }
        \]
        \end{enumerate}
\end{definition}
We now show how to interpret graded $\lambda$-calculus in an autonomous
category $\catC$ equipped with a graded exponential comonad $D$.  For every
ground type $X \in G$ we fix an interpretation $\sem{X}$ as a $\catC$-object
and interpret the type structure inductively in the usual way.  Modal types are
interpreted via the underlying graded comonad, specifically we set $\sem{\g{r}
\typeA} = D_r \sem{\typeA}$.  Given a non-empty context $\Gamma = \Gamma', x :
\typeA$, its interpretation is defined by $\sem{\Gamma', x : \typeA} =
\sem{\Gamma'} \otimes \sem{\typeA}$ if $\Gamma'$ is non-empty and
$\sem{\Gamma', x : \typeA} = \sem{\typeA}$ otherwise. The empty context is
interpreted as $\sem{-} = I$ where $I$ is the unit of $\otimes$ in $\catC$.  We
will also need some `housekeeping' morphisms to handle interactions between
context interpretation and the symmetric monoidal structure of $\catC$.  Given
contexts $\Gamma_1,\dots,\Gamma_n$ we denote by $\spl_{\Gamma_1;
\dots;\Gamma_n} : \sem{\Gamma_1, \dots, \Gamma_n} \to \sem{\Gamma_1} \otimes
\dots \otimes \sem{\Gamma_n}$ the morphism that splits $\sem{\Gamma_1, \dots,
\Gamma_n}$ into $\sem{\Gamma_1} \otimes \dots \otimes \sem{\Gamma_n}$, and by
$\join_{\Gamma_1;\dots;\Gamma_n}$ the corresponding inverse. Given a context
$\Gamma, x : \typeA, y : \typeB, \Delta$ we denote by $\exch_{\Gamma,
\underline{x : \typeA, y : \typeB}, \Delta} : \sem{\Gamma, x : \typeA, y :
\typeB, \Delta} \to \sem{\Gamma, y : \typeB, x : \typeA, \Delta}$ the morphism
corresponding to the permutation of the variable $x : \typeA$ with $y :
\typeB$.  Whenever convenient we will drop variable names in the subscripts of
$\spl$, $\join$, and $\exch$. Given a context $E \in
\Shuff(\Gamma_1;\dots;\Gamma_n)$ the morphism $\sh_E : \sem{E} \to
\sem{\Gamma_1, \dots, \Gamma_n} $ denotes the corresponding shuffling morphism.
For every sorted operation $f : \typeA_1,\dots,\typeA_n \to \typeA \in \Sigma$
we set $\sem{f} : \sem{\typeA_1} \otimes \dots \otimes \sem{\typeA_n} \to
\sem{\typeA}$ as a $\catC$-morphism. Finally we use the rules in
Figure~\ref{fig:lang_sem} to interpret judgements $\Gamma \vljud v : \typeA$ as
$\catC$-morphisms via induction over the judgement derivation system in
Figure~\ref{fig:lang}.

\begin{figure*}[h!]
        \scalebox{0.90}{
\begin{tabular}{llllr}
	& & & & \\
	\multicolumn{4}{l}
	{
	\infer[]{\sem{E \vljud f(v_1,\dots,v_n) : \typeA} = \sem{f} \comp
      (h_1 \otimes \dots \otimes h_n) \comp \spl_{\Gamma_1;\dots;\Gamma_n}
      \comp \, \sh_E}
      {\sem{\Gamma_i \vljud v_i : \typeA_i} = h_i
      \quad f : \typeA_1,\dots,\typeA_n \to \typeA \in \Sigma
      \quad E \in \Shuff(\Gamma_1; \dots ;\Gamma_n) }
	}
	&
	\infer[]{\sem{x : \typeA \vljud x : \typeA} = \id_{\sem{\typeA}}}{}
	\\
	& & & & \\
	 \infer[]{\sem{- \vljud \ast : \typeI} = \id_{\sem{\typeI}}}{}      
	 &
	 \multicolumn{4}{r}
	 {
	 \infer[]{\sem{E \vljud \prog{pm}\ v\ \prog{to}\ x \otimes y.\
          w : \typeC} =
        h \comp \join_{\Delta; \typeA; \typeB}
        \comp \, \alpha \comp \sw \comp (g \otimes \id) \comp
        \spl_{\Gamma;\Delta} \comp \sh_E}
      {\sem{\Gamma \vljud v : \typeA \otimes \typeB} = g
      \quad \sem{\Delta , x : \typeA, y : \typeB \vljud w : \typeC} = h
      \quad E \in \Shuff(\Gamma;\Delta)}
	 }
	 \\
	 & & & & \\
	 \multicolumn{2}{l}
	 {
	 \infer[]{\sem{E \vljud v \otimes w : \typeA \otimes \typeB} =
        (g \otimes h) \comp \spl_{\Gamma;\Delta}
        \comp \sh_E}{\sem{\Gamma \vljud v : \typeA} = g
      \quad \sem{\Delta \vljud w : \typeB} = h
      \quad E \in \Shuff(\Gamma;\Delta)} 
	 }
	 &
	\multicolumn{3}{r}
	{
	\infer[]{\sem{E  \vljud v \ \prog{to}\ \ast.\ w : \typeA} =
        h \comp \lambda \comp
      (g \otimes \id) \comp \spl_{\Gamma; \Delta} \comp \sh_E}
      {\sem{\Gamma \vljud v : \typeI} = g \quad \sem{\Delta 
          \vljud w : \typeA} = h \quad E \in \Shuff(\Gamma;\Delta)}
	}	 
	 \\
        & & & & \\
        \multicolumn{2}{l}
	{
        \infer[]{\sem{\Gamma \vljud \lambda x : \typeA . \, v : \typeA 
                \multimap \typeB} =
        \overline{ (h \comp \join_{\Gamma; \typeA}) }}
        {\sem{\Gamma, x : \typeA \vljud v : \typeB} = h }
        } &
	\multicolumn{3}{r}
	{
        \infer[]{\sem{E \vljud v \, w : \typeB} = \app \comp (g \otimes h)
        \comp \spl_{\Gamma;\Delta} \comp \sh_E}
        {\sem{\Gamma \vljud  v : \typeA \multimap \typeB} = g \quad
        \sem{\Delta \vljud  w : \typeA} = h \quad E \in \Shuff(\Gamma;\Delta)}     
	}
        \\ 
     \multicolumn{5}{c}{
        \dotfill 
     }
     \\
        & & & & \\
        \multicolumn{2}{l}
        {
          \infer[]{\sem{\Gamma \vljud \prog{dr}\, v : \typeA} 
          = \epsilon_{\sem{\typeA}} \comp h} 
          {\sem{\Gamma \vljud v :\ \g{1} \typeA} = h}
        }
        &
        \multicolumn{3}{r}
        {
         \infer[]{\sem{E  \vljud  \prog{ds}\, v .\ w : \typeB} =
                 h \comp \lambda \comp (e_{\sem{\typeA}} \otimes \id) \comp
        (g \otimes \id) \comp \spl_{\Gamma; \Delta} \comp \sh_E}
        {\sem{\Gamma \vljud v :\ \g{0} \typeA} = g \quad \sem{\Delta 
          \vljud w : \typeB} = h \quad E \in \Shuff(\Gamma;\Delta)}
        }
        \\
        & & & & \\
        \multicolumn{5}{c}{
                \infer[]{\sem{E \vljud \prog{cp}_{(n,m)}\
                v\ \prog{to}\ x, y.\
                u : \typeB} =
                h \comp \join_{\Delta; \typeA; \typeA}
                \comp \, \alpha \comp \sw \comp (d_{\sem{\typeA}}^{n,m} \otimes \id)
                \comp (g \otimes \id) \comp
                \spl_{\Gamma;\Delta} \comp \sh_E}
                {
                \sem{\Gamma \vljud v :\ \g{n+m}\ \typeA} = g
                \quad \sem{\Delta , x :\ \g{n} \typeA, 
                y :\ \g{m} \typeA \vljud u : \typeB} = h
                \quad E \in \Shuff(\Gamma;\Delta)}
        }
        \\
        & & & & \\
        \multicolumn{5}{c}{
                \infer[]{\sem{E \vljud \prog{pr}_{(r,\vec{s})}\ 
                                \vec{v}\ \prog{fr}\ \vec{x}.
                        \ u:\ \g{r} \typeA} 
                        = D_r h \comp D_r \join_{\typeA_1;\dots;\typeA_n}
                        \comp\, \phi_{\sem{\typeA_1},\dots,\sem{\typeA_n}}^{D_r}
                        \comp (\delta_{\sem{\typeA_1}}^{r,s_1} \otimes \dots 
                        \otimes \delta_{\sem{\typeA_n}}^{r,s_n}) \comp (g_1 \otimes 
                        \dots \otimes g_n) \comp \spl_{\Gamma_1;\dots;\Gamma_n} 
                        \comp \sh_E
                }{
                        \sem{\Gamma_i \vljud v_i :\ \g{r \cdot s_i}\typeA_i} = g_i 
                        \quad
                        \sem{x_1 :\ \g{s_1} \typeA_1,\dots, 
                        x_n :\ \g{s_n} \typeA_n \vljud u : \typeA} =h
                        \quad E \in \Shuff(\Gamma_1;\dots;\Gamma_n)  
                }
        }
\end{tabular}
}
  \caption{Judgement interpretation.}
  \label{fig:lang_sem}
\end{figure*}

The following lemma is standard and like in analogous contexts useful for
proving the soundness theorem presented below.

\begin{lemma}[Exchange and Substitution]
  \label{lem:exch_subst}
  For all judgements
  $\Gamma, x : \typeA, y : \typeB, \Delta \vljud v : \typeC$,
  $\> \Gamma, x : \typeA \vljud v : \typeB$, and
  $\Delta \vljud w : \typeA$, the following equations hold.
  \begin{align*}
    \sem{\Gamma, x : \typeA, y : \typeB, \Delta \vljud v : \typeC} & =
    \sem{\Gamma, y : \typeB, x : \typeA, \Delta \vljud v : \typeC}
    \comp \exch_{\Gamma, \underline{\typeA,\typeB}, \Delta} \\
    \sem{\Gamma, \Delta \vljud v[w / x] : \typeB} & =
    \sem{\Gamma, x : \typeA \vljud v : \typeB} \comp
    \join_{\Gamma;\typeA} \comp\, (\id \otimes \sem{\Delta \vljud w : \typeA})
    \comp \spl_{\Gamma;\Delta}
  \end{align*}
\end{lemma}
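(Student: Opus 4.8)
The plan is to prove both identities by induction, establishing the exchange equation first since the substitution argument will invoke it (in the guise of its obvious extension to arbitrary context permutations) in order to bring a substituted variable into last position. The key observation making everything routine is that, in every clause of Figure~\ref{fig:lang_sem}, the only data not coming directly from the interpretations of the immediate subterms are the ``housekeeping'' morphisms $\sh_E$, $\spl$, $\join$, $\exch$ together with the unitors, associators and symmetries of $\catC$; all of these are composites of the structural isomorphisms of the symmetric monoidal category $\catC$. Consequently the workhorse of every inductive step is Mac~Lane's coherence theorem: any two parallel morphisms built from structural isomorphisms coincide, so most of the diagram chasing is forced.

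\textbf{Exchange.} Induction on the (unique, by Theorem~\ref{thm:unique_typing}) derivation of $\Gamma, x : \typeA, y : \typeB, \Delta \vljud v : \typeC$. The leaf rules $\rulename{hp}$ and $\rulename{\typeI_i}$ have fewer than two context variables and so cannot occur. For an inductive step one locates the adjacent pair $x : \typeA, y : \typeB$ among the premise-contexts $\Gamma_1,\dots,\Gamma_n$ of the rule. If $x$ and $y$ lie in two distinct blocks $\Gamma_i \ne \Gamma_j$, then transposing them in $E$ yields another shuffle $E' \in \Shuff(\Gamma_1;\dots;\Gamma_n)$, and the derivation of the swapped judgement is the same rule instantiated at $E'$; since $\sh_E = \sh_{E'} \comp \exch_{\Gamma, \underline{\typeA,\typeB}, \Delta}$ by coherence and the rest of the interpreting composite is unchanged, the equation follows. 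If $x$ and $y$ lie in the same block $\Gamma_i$, then (being adjacent in $E$, hence adjacent in $\Gamma_i$) they are transposed inside the $i$-th premise; we apply the induction hypothesis to that premise and push the resulting $\exch$ outward past $\sem{\Gamma_i \vljud v_i}$, past $\spl$ and past $\sh_E$ — using functoriality of $\otimes$, functoriality of the $D_r$ in the case of $\rulename{!_i}$, and coherence. For the binder rules $\rulename{\multimap_i}$, $\rulename{\otimes_e}$, $\rulename{!_i}$, $\rulename{!_{n+m}}$ the bound variables are fresh, hence disjoint from $x$ and $y$, so the argument is unaffected.

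\textbf{Substitution.} Induction on the structure of $v$ in $\Gamma, x : \typeA \vljud v : \typeB$, freely using the exchange equation just proved (and the syntactic exchange of Lemma~\ref{lem:fexch_subst}) to move $x$ to last position whenever it is needed. In the base case $\rulename{hp}$ we have $\Gamma$ empty, $v = x$ and $v[w/x] = w$; the right-hand side then reduces to $\sem{\Delta \vljud w : \typeA}$ by naturality of the left unitor. For an inductive step, $x$ occurs free in the context of exactly one premise (contexts of distinct premises being disjoint), $v[w/x]$ is the corresponding subterm with $w$ substituted — with bound variables renamed by $\alpha$-equivalence so they avoid the free variables of $w$ and $x$ — and we apply the induction hypothesis there; the remaining structural morphisms ($\spl$, $\join$, $\sh$, and, for $\rulename{!_i}$, $\phi^{D_r}$ and the $\delta^{r,s_i}$) are then rearranged by naturality and coherence. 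The case that carries the most bookkeeping is the promotion rule $\rulename{!_i}$, where the substituted subterm sits under the functor $D_r$ and behind the composite $\phi^{D_r}_{\sem{\typeA_1},\dots,\sem{\typeA_n}} \comp (\delta^{r,s_1} \otimes \dots \otimes \delta^{r,s_n})$; but these morphisms are natural in the objects concerned, so they slide past the replacement of $\sem{\Gamma_i \vljud v_i}$ by $\sem{v_i[w/x]}$ without difficulty.

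\textbf{Where the work is.} Neither identity is conceptually hard; the effort is in the bookkeeping of structural morphisms. The points demanding care are making the coherence appeals precise — notably the identity $\sh_E = \sh_{E'} \comp \exch$ for shuffles differing by an adjacent cross-block transposition, and the compatibility of $\spl$ and $\join$ with replacing a single context variable by a whole context $\Delta$ — and verifying, in each multi-premise and binder rule, that exchange (resp.\ substitution) genuinely localises to a single premise so that the induction hypothesis applies. Packaging these as a handful of small standalone coherence and naturality lemmas reduces every remaining case to a mechanical chase.
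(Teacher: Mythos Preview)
Your approach is correct and is exactly the standard one the paper intends: the paper does not spell out a proof of this lemma at all, remarking only that it is ``standard and like in analogous contexts useful for proving the soundness theorem''. Your plan (induction on the unique derivation, case-split on whether the adjacent variables lie in the same premise block or distinct ones, and coherence for the housekeeping morphisms) is precisely what is meant.

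One small slip worth correcting: in the substitution case for $\rulename{!_i}$, the free variable $x$ being substituted can only lie in some $\Gamma_i$ (the variables $x_1,\dots,x_n$ of the $u$-premise are bound), so the induction hypothesis is applied to some $g_i = \sem{\Gamma_i \vljud v_i}$. In the interpreting composite of Figure~\ref{fig:lang_sem} the $g_i$ appear \emph{before} the $\delta^{r,s_j}$, $\phi^{D_r}$ and $D_r h$; hence the substituted subterm does not sit under $D_r$ at all, and nothing needs to slide past $\phi^{D_r}$ or the $\delta$'s --- the only rearrangement required is of $\spl_{\Gamma_1;\dots;\Gamma_n}$ and $\sh_E$, by coherence. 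This makes the case easier than you describe, so there is no gap.
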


\begin{theorem}[Soundness]
  \label{theo:bsound}
  The equations presented in Figure~\ref{fig:eqs} are sound w.r.t.  judgement
  interpretation. More specifically if $\Gamma \vljud v = w : \typeA$ is one of
  the equations in Figure~\ref{fig:eqs} then $\sem{\Gamma \vljud v : \typeA} =
  \sem{\Gamma \vljud w : \typeA}$.
\end{theorem}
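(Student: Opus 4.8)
The plan is to proceed schema by schema: for each equation in Figure~\ref{fig:eqs} one expands $\sem{\Gamma \vljud v : \typeA}$ and $\sem{\Gamma \vljud w : \typeA}$ with the rules of Figure~\ref{fig:lang_sem} and checks that the two resulting $\catC$-morphisms agree. Two ingredients recur on almost every line. The first is Lemma~\ref{lem:exch_subst}, which rewrites any $\sem{v[w/x]}$ as $\sem{v} \comp \join_{\Gamma;\typeA} \comp (\id \otimes \sem{w}) \comp \spl_{\Gamma;\Delta}$ and, by iteration, handles the many-variable substitutions $v[\vec{w}/\vec{x}]$ on the right-hand sides of the $\beta$-style equations. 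The second is a small package of coherence facts for the bookkeeping morphisms $\spl$, $\join$, $\sh$, $\exch$: that $\spl$ and $\join$ are mutually inverse, that $\sh_E$ factors compatibly through context splittings, and that any two composites built from them over the same underlying permutation of variables coincide. These are instances of coherence for symmetric monoidal categories applied to $\catC$, so I would establish them once and use them silently thereafter.

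With this infrastructure the blocks of Figure~\ref{fig:eqs} are dispatched in turn, each equation being the syntactic shadow of one categorical axiom. The monoidal-structure equations reduce, after the substitution lemma, to naturality of $\sw$, $\alpha$, $\lambda$ together with coherence. The closed-structure $\beta$ and $\eta$ equations are exactly the triangle identities for the tensor--hom adjunction (the $\beta$ case modulo the substitution lemma). In the symmetric comonadic block the graded comonad axioms enter: unfolding $\sem{\prog{dr}\,(-)}$ and $\sem{\prog{pr}_{(r,\vec{s})}\,(-)}$ turns the first two equations into the counit triangles of \eqref{eq:gradedcomonad} — one also uses there that $\epsilon$ and $\phi^{D_1}$ respect the monoidal unit and that $1$ is a unit for $\cdot$ — the third equation into the comultiplication-associativity square of \eqref{eq:gradedcomonad} plus naturality of $\delta$ and $\phi^{D_r}$, and the fourth into the symmetry of $D$ as a symmetric monoidal comonad. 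The commutative-comonoid block mirrors this for the oplax monoidal structure of $D$ over $(R,0,+)$ from Definition~\ref{defn:graded}(1): the co-unit triangles, the co-associativity square (the analogues of \eqref{eq:gradedcomonad} for the $\otimes$-structure) and the cocommutativity square are read off the monoidal natural transformations $e$, $d$ and the symmetry square displayed there, with the monoid laws of $(R,0,+)$ surfacing as the grade arithmetic needed for both sides to be well-typed.

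The laborious part, and the one I expect to be the main obstacle, is the ``interaction between comonoid and comonad'' block, together with the commuting conversions. In the former both sides carry the full interpretation of $\prog{pr}$ — with its nested $D_r h$, $D_r\join$, $\phi^{D_r}_{\sem{\typeA_1},\dots,\sem{\typeA_n}}$ and the tensor $\delta^{r,s_1} \otimes \dots \otimes \delta^{r,s_n}$ — so one must push the morphisms $e_{\sem{\typeA}}$ and $d_{\sem{\typeA}}^{n,m}$ through these composites. The strategy is to peel the recursion defining $\phi^{D_r}_{\sem{\typeA_1},\dots,\sem{\typeA_n}}$ one tensor factor at a time, invoking at each step the monoidality of $e$, $d$, $\delta$ and the four interaction diagrams of Definition~\ref{defn:graded}(2) — which are arranged precisely so that ``$e$ past $\delta$'' and ``$d$ past $\delta$'', in both the left and the right argument, commute — interleaved with naturality of $\phi^{D_r}$ and the coherence lemmas for $\spl$, $\join$, $\sh$; the side conditions that $0$ is absorbing and that $\cdot$ distributes over $+$ reappear here as the arithmetic making the two sides typecheck. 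The commuting conversions are, by comparison, light: substituting into a hole $u[-/z]$ is interpreted by precomposition along the strand of $z$, so moving $\prog{ds}$, $\prog{cp}$, $\prog{to}\ \ast$ or $\prog{pm}$ out of that hole amounts to naturality of $e$, $d_{\sem{\typeA}}^{n,m}$, $\lambda$ or $\alpha \comp \sw$ along that strand, the only delicate point being the context-splitting bookkeeping, again handled by the coherence lemmas.
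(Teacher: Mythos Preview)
Your proposal is correct and follows essentially the same approach as the paper's own proof: a schema-by-schema case analysis that unfolds the interpretation rules, invokes Lemma~\ref{lem:exch_subst} for substitutions, appeals to the coherence theorem for symmetric monoidal categories to handle the $\spl$/$\join$/$\sh$/$\exch$ bookkeeping, and then matches each syntactic equation to the corresponding categorical axiom (counit and coassociativity laws for the comonad, comonoid laws for $e$ and $d$, the interaction diagrams of Definition~\ref{defn:graded}(2), naturality for the commuting conversions). The paper additionally isolates a small auxiliary lemma about the iterated monoidal structure maps $\phi^{D_r}_{\vec{X},\vec{Y}}$ to streamline the third comonadic equation, which is what your ``peel the recursion defining $\phi^{D_r}$ one factor at a time'' step amounts to.
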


\section{A complete $\V$-equational system for graded $\lambda$-calculus
}\label{sec:quant_eq}

We now present a $\V$-equational system for graded $\lambda$-calculus
and prove its soundness and completeness.

\subsection{The $\V$-equational system}

We start by recalling from~\cite{dahlqvist22} the conditions imposed on $\V$ to
obtain a well-behaved framework of $\V$-equations. We will then extend this
framework to the graded seting.  Let $\mathcal{V}$ denote a commutative and
unital quantale, $\otimes : \mathcal{V} \times \mathcal{V} \to \mathcal{V}$ the
corresponding binary operation, and $k$ the unit~\cite{paseka00}.  Consider  now
the two following definitions concerning ordered
structures~\cite{GHK+03,JGL-topology} (they will allow us to work with 
specified subsets of $\mathcal{V}$-equations chosen \eg for computational
reasons~\cite{dahlqvist22}).

\begin{definition}
	Take a complete lattice $L$.  For every $x, y \in L$ we say that
	$y$ is \emph{way-below} $x$ (in symbols, $y \ll x$) if for every
	subset $X \subseteq L$ whenever $x \leq \bigvee X$ there exists a
	\emph{finite} subset $A \subseteq X$ such that $y \leq \bigvee A$.
	The lattice $L$ is called \emph{continuous} iff for every $x \in L$,
	\begin{flalign*}
		x = \bigvee \{ y  \mid y \in L\ \text{and} \ y \ll x \}
	\end{flalign*}
\end{definition}

\begin{definition}
	Let $L$ be a complete lattice. A \emph{basis} $B$ of $L$ is a subset
	$B \subseteq L$ such that for every $x \in L$ the set
	$B \cap \{ y \mid y \in L\ \text{and} \ y \ll x \}$ is directed and
	has $x$ as the least upper bound.
\end{definition}
We assume that the underlying lattice of $\mathcal{V}$ is continuous and has a
basis $B \ni k$ closed under finite joins and multiplication. As alluded above,
the continuity condition will allow us to work only with $\V$-equations whose
label is in $B$.  We also assume that $\mathcal{V}$ is \emph{integral}, \ie
that the unit $k$ is the top element of $\mathcal{V}$, a common assumption in
quantale theory~\cite{velebil19} that facilitates some of our results.

\begin{example}
        The Boolean quantale $((\{0 \leq 1\}, \vee), \otimes := \wedge)$ is
        \emph{finite} and thus continuous~\cite{GHK+03}. Since it is
        continuous, $\{0,1\}$ itself is a basis for the quantale that satisfies
        the conditions above. For the metric quantale $(([0,\infty], \wedge),
        \otimes := +)$ (note that the order on this quantale is the opposite of the usual order on $[0,\infty]$), the way-below relation corresponds to the
        \emph{strictly greater} relation with $\infty > \infty$, and a basis
        for the underlying lattice that satisfies the conditions above is the
        set of extended non-negative rational numbers. Other examples of
        quantales that satisfy the conditions above can be found
        in~\cite{dahlqvist22}.
\end{example}

A $\mathcal{V}$-equation-in-context is an expression $\Gamma \vljud v =_q w :
\typeA$ where $q \in B$ (the basis of $\mathcal{V}$), and $\Gamma \vljud v :
\typeA$, $\Gamma \vljud w : \typeA$ are graded $\lambda$-terms. If $\V$ is the
metric quantale we obtain metric equations-in-context and if $\V$ is the
Boolean quantale we obtain inequations-in-context (where $v =_1 w$ corresponds
to $v \leq w$).  In this $\V$-equational setting a classical
equation-in-context $v = w$ translates to $v =_k w \wedge w =_k v$. For example
in the metric case $v = w \equiv v =_0 w \wedge w =_0 v$ and in the Boolean
case $v = w \equiv v \leq w \wedge w \leq v$.

We can now move to the graded setting.
\newcommand{\smu}{\mathbin{\bullet}}

\begin{definition}\label{defn:scalar}
        A \emph{scalar multiplication} of a semiring $\mathcal{R}$ on a
        quantale $\V$ is a function $\smu : R \times \V \to \V$ such that for
        each $k\in R$, the map $k\smu -:\V\to\V$ preserves joins in $\V$.
\end{definition}
The definition entails in particular that for all $v,v' \in \V$ if $v \geq v'$
then $k \smu v \geq k \smu v'$. 
\begin{definition}[Graded $\V \lambda$-theories]\label{defn:theory}
  Consider a tuple $(G,\Sigma)$ consisting of a set $G$ of ground types and a
  set $\Sigma$ of sorted operation symbols.  A \emph{graded $\V
  \lambda$-theory} $((G,\Sigma),Ax)$ is a triple such that $Ax$ is a set of
  $\V$-equations-in-context between $\lambda$-terms built from $(G,\Sigma)$.
\end{definition}
The elements of $Ax$ are called the \emph{axioms} of the theory. Let $Th(Ax)$
be the smallest $\V$-indexed binary relation (the $\V$-equations) that contains
$Ax$, the equational schema presented in Figure~\ref{fig:eqs},  and that is
closed under the rules listed in Figure~\ref{fig:theo_rules}. We call the
elements of $Th(Ax)$ the \emph{theorems} of the theory.  Intuitively the rules
in Figure~\ref{fig:theo_rules} above the first dotted line can be seen as a
$\V$-generalisation of an equivalence relation (see~\cite{dahlqvist22} for a
more detailed explanation). The other rules correspond to a $\V$-generalisation
of compatibility. The rule concerning promotion is slightly different from the
others in that it involves a $k$-factor ($k \smu -$) to reflect the fact that
$u$ (resp. $u'$) becomes usable $k$-times.  Finally, note that we can
consider \emph{symmetric} graded $\V \lambda$-theories by adding to the mix the
rule,
\[
        \infer{w =_q v}{v =_q w}
\]
This is desirable for example in the (ultra-)metric case but makes no sense if
one wishes to work with inequations (graded inequational
$\lambda$-theories collapse to graded equational ones under this rule).
\begin{figure}[h!]
        \centering
        \scalebox{0.905}{
	\begin{tabular}{l l l r}
		\infer[\textbf{(refl)}]{v =_\top v}{}
		&
                \multicolumn{2}{c}{
		\infer[\textbf{(trans)}]{v =_{q_1 \otimes q_2} u}{
			v =_{q_1} w  \qquad
			w =_{q_2} u}
                }
		&
                \infer[\textbf{(weak)}]{v =_{q_2} w}{v =_{q_1} w \qquad {q_2} \leq {q_1} }
		\\
                & & &  \\
                \multicolumn{2}{r}{
		\infer[\textbf{(arch)}]{v =_{q_1} w}{
			\forall {q_2} \ll q_1 .\ v =_{q_2} w}
                }
                &
                \multicolumn{2}{c}{
		        \infer[\textbf{(join)}]{v =_{\vee q_i} w}{
                        \forall i \leq n.\ v =_{q_i} w}
                }
                \\
                \multicolumn{4}{c}{
                        \dotfill 
                }
                \\
                & & & \\
		\infer[]{f(v_1,\dots,v_n) =_{\otimes q_i} f(w_1,\dots,w_n)}
		{\forall i \leq n.\ v_i =_{q_i} w_i}
		&
                \multicolumn{2}{r}{
		\infer[]{
			\prog{pm}\ v\ \prog{to}\ x \otimes y.\ v' =_{q_1 \otimes q_2}
			\prog{pm}\ w\ \prog{to}\ x \otimes y.\ w'}
			{v =_{q_1} w \qquad v' =_{q_2} w' }
		}
                &
                \infer[]{
			v\ \prog{to}\ \ast .\  v'=_{q_1 \otimes q_2}
			w\ \prog{to}\ \ast .\ w'}
		{v =_{q_1} w \qquad v' =_{q_2} w'}
		\\
                &&&
                \\
                \infer[]{v \otimes v' =_{q_1 \otimes q_2} w \otimes w'}{
			v =_{q_1} w \quad v' =_{q_2} w'}
               	&
                \multicolumn{2}{c}{
                \infer[]{
                        \lambda x : \typeA .\ v =_q  \lambda x :
                              \typeA .\ w}{v =_q w}
                }
                &
                \infer[]{v \, v' =_{q_1 \otimes q_2} w \, w'}
		        {v =_{q_1} w \quad v' =_{q_2} w'}
                \\
                \multicolumn{4}{c}{
                        \dotfill 
                }
                \\
                &&&
                \\
                \infer[]{\prog{dr} \,v =_q \prog{dr} \,v'}
                {v =_q v'}
                &
                \multicolumn{2}{c}{
                  \infer{\prog{cp}_{(n,m)} v\ \prog{to}\ x,y.\ w =_{q_1 \otimes q_2}
                  \prog{cp}_{(n,m)} v'\ \prog{to}\ x,y.\ w'}{
                       v =_{q_1} v' \quad w =_{q_2} w' 
                }}
                &
                \infer[]{\prog{ds} \,v.\ w =_{q_1 \otimes q_2} \prog{ds} \,v'.\ w'}
                { v =_{q_1} v' \quad w =_{q_2} w' }
                \\
                &&&
                \\
                \infer[]{\Delta \vljud v =_q w : \typeA}{
	 	 		\Gamma \vljud v =_q w : \typeA \qquad \Delta \in
                 \mathrm{perm}(\Gamma)}
                 &
                \multicolumn{2}{c}{
                    \infer[]{\prog{pr}_{(r,\vec{s})}\ \vec{v}\ \prog{fr}\
                         \vec{x} .\ u =_{\otimes q_i \otimes (r \smu q')}
                         \prog{pr}_{(r,\vec{s})}\ \vec{v'}\ \prog{fr}\
                                \vec{x} .\ u'
                     }{\forall i \leq n.\ v_i =_{q_i} v'_i \qquad u =_{q'} u'}}
                 &
                 \infer[]{v[v'/x] =_{q_1 \otimes q_2}w[w'/x]}
	 	  {v =_{q_1} w \qquad v' =_{q_2} w'}		
                 \end{tabular}
}
\caption{$\mathcal{V}$-congruence rules.}
\label{fig:theo_rules}
\end{figure}

\subsection{Interpretation of $\V$-equations, soundness, and completeness}
\label{sec:veq}
In this subsection we recall the interpretation of $\V$-equations in the
setting of linear $\lambda$-calculus~\cite{dahlqvist22} and extend it to the
graded case. The main idea is that we suitably enrich the interpretation
structure in Definition~\ref{defn:graded} (an autonomous category equipped with
a graded exponential comonad) so that the corresponding hom-sets become
equipped with a `generalised metric structure'.  More technically the basis of
enrichment is that of
$\V$-categories~\cite{lawvere73,stubbe14,hofmann20,velebil19}, a concept which
we recall below.  We prove soundness and completeness of the previous
$\V$-equational system \wrt this interpretation.

\begin{definition}
  \label{defn:vcat}
  A $\mathcal{V}$-category is a pair $(X,a)$ where $X$ is a set and $a : X
  \times X \to \mathcal{V}$ is a function that satisfies $k \leq a(x,x)$ and
  $a(x,y) \otimes a(y,z) \leq a(x,z) $ for all $x,y,z \in X$.  For two
  $\mathcal{V}$-categories $(X,a)$ and $(Y,b)$, a $\mathcal{V}$-functor $f :
  (X,a) \to (Y,b)$ is a function $f : X \to Y$ that satisfies the inequality
  $a(x,y) \leq b(f(x),f(y))$ for all $x,y \in X$.
\end{definition}
$\mathcal{V}$-categories and $\mathcal{V}$-functors form a category which we
denote by $\VCat$.  A $\mathcal{V}$-category $(X,a)$ is called \emph{symmetric}
if $a(x,y) = a(y,x)$ for all $x,y \in X$. We denote by $\VCatSy$ the full
subcategory of $\VCat$ whose objects are symmetric. Every
$\mathcal{V}$-category carries a natural order defined by $x \leq y$ whenever
$k \leq a(x,y)$. A $\mathcal{V}$-category is called \emph{separated} if its
natural order is anti-symmetric. We denote by $\VCatSe$ the full subcategory of
$\VCat$ whose objects are separated.
\begin{example}
  For $\mathcal{V}$ the Boolean quantale, $\VCatSe$ is the category $\Pos$ of
  partially ordered sets and monotone maps, and $\VCatSS$ is the category
  $\Set$ of sets and functions.  For $\mathcal{V}$ the metric quantale,
  $\VCatSS$ is the category $\Met$ of metric spaces and non-expansive maps.
  For more examples see~\cite{dahlqvist22}.
\end{example}
We will take advantage of the following useful facts about $\V$-categories.
The inclusion functor $\VCatSe \hookrightarrow \VCat$ has a left adjoint
\cite{hofmann20}. It is constructed first by defining the equivalence relation
$x \sim y$ whenever $x \leq y$ and $y \leq x$ (where $\leq$ is the natural
order introduced earlier). Then this relation induces the separated
$\mathcal{V}$-category $(X/_\sim, \tilde a)$ where $\tilde a$ is defined as
$\tilde a([x],[y]) = a(x,y)$ for every $[x],[y] \in X/_\sim$. Finally the left
adjoint of the inclusion functor $\VCatSe \hookrightarrow \VCat$ sends every
$\mathcal{V}$-category $(X,a)$ to $(X/_\sim, \tilde a)$.  The category
$\VCat$ is autonomous with the tensor $(X,a) \otimes (Y,b) := (X \times Y, a
\otimes b)$ where $a\otimes b$ is defined as $(a \otimes b)((x,y), (x',y')) =
a(x,x') \otimes b(y,y')$ and the set of $\mathcal{V}$-functors
$\VCat((X,a),(Y,b))$  equipped with the map,
\[
(f,g) \mapsto \bigwedge_{x \in X} b(f(x),g(x))
\]
$\VCatSy$, $\VCatSe$, and $\VCatSS$
inherit the autonomous structure of $\VCat$ whenever $\mathcal{V}$ is
integral~\cite{dahlqvist22}.

\begin{definition}\label{defn:enr_aut}
  A $\VCat$-enriched autonomous category $\catC$ is an autonomous and
  $\VCat$-enriched category $\catC$ such that the bifunctor $\otimes : \catC
  \times \catC \to \catC$ is a $\VCat$-functor and the adjunction $(- \otimes
  X) \dashv (X \multimap -)$ is a $\VCat$-adjunction.  We obtain analogous
  notions of enriched autonomous category by replacing $\VCat$ (as basis of
  enrichment) with $\VCatSe$, $\VCatSy$, or $\VCatSS$.
\end{definition}

\begin{example}\label{ex:pos_met}
  The categories $\Pos$, $\Met$, and $\Set$ are instances of
  Definition~\ref{defn:enr_aut}. 
\end{example}
We now turn our attention to the graded case, more specifically on how to
suitably enrich the underlying graded exponential comonad. An obvious way of
doing so would be to state that for every $r \in R$ the functor $D_r : \catC
\to \catC$ is $\VCat$-enriched.  This however turns out to be too strict to
soundly interpret the $\V$-compatibility rule concerning promotion
(Figure~\ref{fig:theo_rules}).  Instead we adopt a more relaxed variant which
formally resembles the well-known notion of Lipschitz-continuity from calculus.

\begin{definition}\label{def:Lipschitz}
        An $\mathcal{R}$-Lipschitz exponential comonad (for a scalar multiplication $\smu: R\times \V\to\V$) is 
        an $\mathcal{R}$-graded exponential comonad
        such that 
        the inequality, \[ r\smu a(f,g)
        \leq a(D_r f, D_r g) \] holds for all $\catC$-morphisms $f,g : X \to Y$
        and $r \in R$.
\end{definition}

\begin{definition}[Models of graded $\V\lambda$-theories]\label{defn:model}
        Consider a  graded $\V\lambda$-theory $((G,\Sigma),Ax)$ and a
        $\VCatSe$-autonomous category $\catC$ equipped with an $\mathcal{R}$-Lipschitz
        exponential comonad. Suppose that for each $X \in G$ we have an
        interpretation $\sem{X}$ as a $\catC$-object and analogously for the
        operation symbols.  This interpretation structure is a \emph{model} of
        the theory if all axioms are satisfied by the interpretation, \ie if
        $v =_q w$ is an axiom of the theory then $a(\sem{v},\sem{w}) \geq q$.

        In the case of \emph{symmetric} graded $\V \lambda$-theories the
        corresponding notion of a model is obtained by replacing the basis of
        enrichment (\ie $\VCatSe$) by $\VCatSS$.
\end{definition}

We can now prove that the $\V$-equational system of graded $\lambda$-calculus
is sound and complete \wrt Definition~\ref{defn:model}.

\begin{theorem}[Soundness]
        \label{theo:sound}
        Consider a (symmetric) $\V \lambda$-theory $\mathscr{T}$ and a model
        $M$ of $\mathscr{T}$ over $\catC$. If $v =_q w$ is a theorem of $\mathscr{T}$
        then $a(\sem{v},\sem{w}) \geq q$.
\end{theorem}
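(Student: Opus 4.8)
The plan is to argue by induction on the derivation witnessing that $v =_q w$ is a theorem of $\mathscr{T}$, i.e.\ on the rule used to place the pair $(\sem{v},\sem{w})$ at label $q$ in the $\V$-indexed relation $Th(Ax)$. Throughout, $a$ denotes the $\V$-category structure carried by the relevant hom-object of $\catC$, and we use freely that $\V$ is integral ($k=\top$), commutative, and continuous. The base cases are immediate: if $v =_q w$ is an axiom, then $a(\sem{v},\sem{w})\ge q$ by Definition~\ref{defn:model}; if $v =_q w$ is one of the equations of Figure~\ref{fig:eqs}, then Theorem~\ref{theo:bsound} gives $\sem{v}=\sem{w}$ as $\catC$-morphisms, whence $a(\sem{v},\sem{w})=a(\sem{v},\sem{v})\ge k=\top\ge q$ by the unit law of a $\V$-category; and the rule $\textbf{(refl)}$ is subsumed by the same computation.

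For the rules turning $Th(Ax)$ into a $\V$-generalised equivalence I would proceed directly. Rule $\textbf{(trans)}$ follows from the triangle law $a(\sem{v},\sem{w})\otimes a(\sem{w},\sem{u})\le a(\sem{v},\sem{u})$ of a $\V$-category together with monotonicity of $\otimes$; $\textbf{(weak)}$ is immediate from $q_2\le q_1$; $\textbf{(arch)}$ uses continuity of $\V$, since the induction hypothesis yields $a(\sem{v},\sem{w})\ge q_2$ for every $q_2\ll q_1$, hence $a(\sem{v},\sem{w})\ge\bigvee\{q_2 : q_2\ll q_1\}=q_1$; and $\textbf{(join)}$ is immediate since joins are suprema. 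In the \emph{symmetric} case the extra symmetry rule is sound because the base of enrichment is then $\VCatSS$, so each hom-$\V$-category satisfies $a(x,y)=a(y,x)$.

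The bulk of the work is the compatibility rules. The key observation, read off from Figure~\ref{fig:lang_sem}, is that the interpretation of every term constructor is obtained from the interpretations of its immediate subterms by (i) tensoring them (possibly against identities) and (ii) pre-/post-composing with a \emph{fixed} structural morphism built from $\spl$, $\join$, $\sh$, $\exch$, $\lambda$, $\alpha$, $\sw$, $\app$, $\epsilon$, $e$, $d^{n,m}$, $\phi$, $\delta$ (plus currying in the $\multimap$-introduction case). Since $\catC$ is autonomous and enriched over $\VCatSe$ (resp.\ $\VCatSS$): composition $\catC(Y,Z)\otimes\catC(X,Y)\to\catC(X,Z)$ is a $\V$-functor, so $a(h,h')\otimes a(f,f')\le a(h\comp f,h'\comp f')$, and taking $h=h'$ (using $k\le a(h,h)$ and that $k$ is the $\otimes$-unit) shows that composing with a fixed morphism on either side does not decrease $\V$-distance; likewise $\otimes$ is a $\V$-functor, giving $a(g,g')\otimes a(h,h')\le a(g\otimes h,g'\otimes h')$; and currying is a $\V$-isomorphism of hom-objects because the adjunction $(-\otimes X)\dashv(X\multimap-)$ is $\V$-enriched. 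Combining these, for every compatibility rule other than promotion the $\V$-distance between the two interpretations of the conclusion is at least the $\otimes$-product of the $\V$-distances between the corresponding subterm interpretations, which by induction hypothesis is at least the $\otimes$-product of the premise labels — exactly the label carried by the conclusion. The permutation rule is handled the same way through the exchange clause of Lemma~\ref{lem:exch_subst}, and the substitution rule through its substitution clause.

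The one rule that truly requires more than a $\VCat$-enrichment — and which I expect to be the main obstacle — is the promotion rule. Here $\sem{\prog{pr}_{(r,\vec{s})}\ \vec{v}\ \prog{fr}\ \vec{x}.\ u}$ is of the form $D_r h\comp(\text{fixed morphism})\comp(g_1\otimes\dots\otimes g_n)\comp(\text{fixed morphism})$ with $g_i=\sem{v_i}$ and $h=\sem{u}$. By the argument above the part to the right of $D_r h$ contributes $\V$-distance at least $\bigotimes_i q_i$. For the head, the naive estimate $a(D_r h,D_r h')\ge a(h,h')$ (which would amount to each $D_r$ being a $\V$-functor) fails, and is replaced by the Lipschitz inequality of Definition~\ref{def:Lipschitz}: $a(D_r h,D_r h')\ge r\smu a(h,h')\ge r\smu q'$, where the last step uses that $r\smu(-)$ is monotone (it preserves joins, cf.\ Definition~\ref{defn:scalar} and the remark after it). Composing head and tail through the composition $\V$-functor yields $\V$-distance at least $(r\smu q')\otimes\bigl(\bigotimes_i q_i\bigr)=\bigl(\bigotimes_i q_i\bigr)\otimes(r\smu q')$ by commutativity of $\otimes$, which is precisely the label in the conclusion. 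Once this case is isolated, everything else is bookkeeping; the only point to watch is confirming that the structural morphisms in Figure~\ref{fig:lang_sem} are genuinely independent of the subterms, so that the ``composing with a fixed morphism'' step applies.
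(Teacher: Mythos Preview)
Your proposal is correct and follows essentially the same approach as the paper: induction on the derivation, with the base cases handled by Theorem~\ref{theo:bsound} and Definition~\ref{defn:model}, the $\V$-equivalence and non-graded compatibility rules by $\VCat$-enrichment of composition, tensor, and currying (the paper simply cites~\cite{dahlqvist22} for these), and the promotion rule by combining the Lipschitz inequality of Definition~\ref{def:Lipschitz} with monotonicity of $r\smu(-)$. Your identification of promotion as the one rule that genuinely needs the Lipschitz hypothesis, and your explicit check that the remaining structural morphisms in Figure~\ref{fig:lang_sem} are fixed (depending only on types/contexts), match the paper's emphasis exactly.
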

\begin{proof}
        The fact that the equational schema listed in Figure~\ref{fig:eqs} is
        sound follows from Theorem~\ref{theo:bsound} and the definition of a
        $\V$-category (Definition~\ref{defn:vcat}). The proof then follows by
        induction over the rules listed in Figure~\ref{fig:theo_rules}.  We
        only focus on those rules that concern graded modal types (the other
        ones were already proved in~\cite{dahlqvist22}). The case of
        deriliction follows directly from the fact that for all $X \in |\catC|$
        the morphism $\epsilon_X : D_1 X \to X$ lives in $\catC$ and $\catC$ is
        $\VCat$-enriched. The rules that concern copying and discarding follow
        from an analogous reasoning.  The rule that concerns promotion also
        follows similarly to the above except that we use the two following
        properties: first, for all $q,q'\in \V$ and $r \in R$ if $q \geq q'$
        then $r \smu q \geq r \smu q'$;  second, the fact that the graded
        comonad is Lipschitz. In conjunction both properties entail the
        implication $a(\sem{u},\sem{u'}) \geq q' \Rightarrow a(D_r \sem{u},D_r
        \sem{u'}) \geq r \smu q'$.
\end{proof}

The completeness result is based on the idea of a \emph{Lindenbaum-Tarski}
algebra: it follows from building the syntactic category $\Syn(\mathscr{T})$ of
$\mathscr{T}$, showing that it is a model of $\mathscr{T}$, and then showing
that if $a(\sem{v},\sem{w}) \geq q$ in $\Syn(\mathscr{T})$ the $\V$-equation $v
=_q w$ is a theorem of $\mathscr{T}$.  In order to build $\Syn(\mathscr{T})$
and to show that it is indeed a model of $\mathscr{T}$, we resort to the notion
of a \emph{multicategory} and associated
constructions~\cite{lambek69,leinster04,hermida00,lobbia21}. More specifically,
we will first generate a syntactic multicategory $\Syn_M(\mathscr{T})$ from
$\mathscr{T}$ and then show that the former induces an autonomous
$\Syn(\mathscr{T})$ with the necessary requisites to be a model of
$\mathscr{T}$. The reason we involve multicategories is that some equations we
need to face are much more easily proved in this framework, an observation
already made in analogous contexts~\cite{lambek69,benton92}.  For the same
purpose, we also use a bijective correspondence between graded
comonads and graded co-Kleisli triples on a multicategory.

\begin{theorem}[Soundness \&  Completeness]
  \label{theo:compl} For a (symmetric) graded $\V \lambda$-theory
  $\mathscr{T}$,  a $\V$-equation $\Gamma \vljud v =_q w : \typeA$ is a theorem
  of $\mathscr{T}$ iff it is satisfied by all models of the theory.
\end{theorem}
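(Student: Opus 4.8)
The plan is to prove the two directions separately. Soundness (the ``only if'' direction) is already established in Theorem~\ref{theo:sound}, so the real work is completeness: if a $\V$-equation $\Gamma \vljud v =_q w : \typeA$ is satisfied by every model of $\mathscr{T}$, then it is a theorem of $\mathscr{T}$. Following the standard Lindenbaum--Tarski strategy sketched just before the statement, I would construct a \emph{syntactic model} $\Syn(\mathscr{T})$ out of the theory itself, with the defining property that $a(\sem{v},\sem{w}) \geq q$ holds in $\Syn(\mathscr{T})$ \emph{if and only if} $v =_q w$ is a theorem. Granting this, completeness is immediate: if $v =_q w$ holds in all models then in particular it holds in $\Syn(\mathscr{T})$, whence it is a theorem.

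So the bulk of the proof is the construction and verification of $\Syn(\mathscr{T})$. First I would build the syntactic \emph{multicategory} $\Syn_M(\mathscr{T})$: objects are types, a multimap $\typeA_1,\dots,\typeA_n \to \typeB$ is an equivalence class (under provable equality $=_k$, i.e.\ $v =_k w \wedge w =_k v$) of terms $x_1 : \typeA_1,\dots,x_n:\typeA_n \vljud v : \typeB$, with composition given by (iterated) substitution. Here Lemma~\ref{lem:fexch_subst} and the derivability of substitution in standard format are exactly what make composition well-defined and associative, and Theorem~\ref{thm:unique_typing} (unique derivations) is what lets us refer to denotations unambiguously. I would then equip each hom-set with the $\V$-category structure $a([v],[w]) = \bigvee\{q \mid v =_q w \text{ is a theorem}\}$; the rules \textbf{(refl)}, \textbf{(trans)}, \textbf{(weak)}, \textbf{(arch)}, \textbf{(join)} in Figure~\ref{fig:theo_rules} are precisely what guarantees this is a well-defined $\V$-category (the join is attained by \textbf{(join)} together with \textbf{(arch)} for the basis-approximation), and separatedness (or symmetry in the symmetric case) comes from quotienting by $=_k$ (resp.\ from the symmetry rule). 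The $\V$-compatibility rules make composition, tensor, and currying into $\V$-functors/$\V$-adjunctions, yielding a $\VCat$-enriched (in fact $\VCatSe$- or $\VCatSS$-enriched) autonomous structure once one passes from the multicategory to the autonomous category $\Syn(\mathscr{T})$ it represents, via the representability machinery of~\cite{hermida00,lambek69}. The graded modal types are interpreted by $D_r\typeA = \g{r}\typeA$, with $\epsilon$, $\delta$, $e$, $d$ read off from the $\prog{dr}$, $\prog{pr}$, $\prog{ds}$, $\prog{cp}$ constructors; the equational schema of Figure~\ref{fig:eqs} is exactly what forces $(D,\epsilon,\delta,e,d)$ to satisfy Definitions~\ref{defn:graded_com} and~\ref{defn:graded}, and the \emph{promotion} congruence rule with its $r\smu(-)$ factor is exactly the Lipschitz inequality $r\smu a(f,g) \leq a(D_r f, D_r g)$ of Definition~\ref{def:Lipschitz}. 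For this step I would lean on the bijection between graded comonads and graded co-Kleisli triples on a multicategory (Theorem~\ref{theo:comonad_coKleisli}), which lets me verify the comonad laws in the more tractable co-Kleisli presentation.

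Having built $\Syn(\mathscr{T})$ as a $\VCatSe$-autonomous category with an $\mathcal{R}$-Lipschitz exponential comonad, I would check it is a \emph{model} of $\mathscr{T}$: the interpretation $\sem{X} = X$, $\sem{f} = [f(x_1,\dots,x_n)]$ extends by induction on the judgement rules of Figure~\ref{fig:lang_sem} to $\sem{\Gamma \vljud v : \typeA} = [v]$ (this identity, proved by induction, is the ``Lindenbaum--Tarski'' core), and then for any axiom $v =_q w$ we have $a(\sem{v},\sem{w}) = a([v],[w]) \geq q$ by definition. Finally, for the key direction, if $a(\sem{v},\sem{w}) = a([v],[w]) \geq q$ in $\Syn(\mathscr{T})$, then by definition of $a$ and the \textbf{(arch)}/\textbf{(join)} rules (using that $q \in B$ and $B$ is a basis of the continuous lattice $\V$) one deduces $v =_q w$ is a theorem. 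This closes the loop.

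The main obstacle will be the multicategory-to-autonomous-category passage together with the bookkeeping needed to show all the graded-comonad coherence diagrams of Definition~\ref{defn:graded} hold in $\Syn(\mathscr{T})$ — in particular the interaction diagrams in part~(2), which correspond to the more intricate equations in the ``interaction between comonoid and comonad'' block of Figure~\ref{fig:eqs} and whose translation through the shuffling mechanism and the $\prog{pr}$/$\prog{cp}$ constructors is delicate. A secondary but genuine subtlety is checking that the Lipschitz (rather than strictly enriched) condition is both \emph{necessary} (a strictly $\VCat$-enriched $D_r$ would not validate all instances, which is the whole point of Definition~\ref{def:Lipschitz}) and \emph{sufficient} to soundly model the promotion congruence rule — i.e.\ that $a([v_i],[v'_i]) \geq q_i$ and $a([u],[u']) \geq q'$ really do give $a([\prog{pr}_{(r,\vec s)}\,\vec v\,\prog{fr}\,\vec x.u],[\prog{pr}_{(r,\vec s)}\,\vec v'\,\prog{fr}\,\vec x.u']) \geq \bigotimes_i q_i \otimes (r \smu q')$, which unwinds to a compatibility between $\smu$, $\otimes$, and the comonad structure that must be tracked carefully through the interpretation in Figure~\ref{fig:lang_sem}.
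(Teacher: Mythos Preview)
Your proposal is correct and takes essentially the same approach as the paper: build the syntactic multicategory $\Syn_M(\mathscr{T})$ with hom-$\V$-categories $a([v],[w]) = \bigvee\{q \mid v =_q w \text{ is provable}\}$, equip it with a graded co-Kleisli triple (transported to a comonad via Theorem~\ref{theo:comonad_coKleisli}), pass to the autonomous $\Syn(\mathscr{T})$ by representability, and use \textbf{(arch)}/\textbf{(join)}/\textbf{(weak)} to recover theoremhood from $a(\sem{v},\sem{w}) \geq q$. The obstacles you flag---the comonoid--comonad interaction diagrams and the Lipschitz verification for promotion---are precisely where the paper also concentrates its effort.
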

\section{A canonical construction of Lipschitz exponential
comonads}\label{sec:canonical}

This section presents a canonical construction of Lipschitz exponential
comonads on $\VCat$-autonomous categories that satisfy certain conditions.  The
construction is inspired by~\cite{mellies09}, which shows how to build
(non-graded) exponential comonads via the notion of a (co)free commutative
(co)monoid. In order to describe the connection to \emph{op.\ cit.}\ at a
suitable level of abstraction, we start with a brief overview of this
construction in the form of abstract categorical results. We will then provide a more direct construction.

Let $\catfont{Mon}_\pi(\catC)$ be the category of commutative monoids in a
symmetric monoidal category $\catC$.  A crucial observation is that a comonoid
in $\catC$ is the same thing as a monoid in $\catC^\cop$~\cite{mellies09} --
thus the category of commutative comonoids can be seen as
$\catfont{Mon}_\pi(\catC^\cop)^\cop$.  The other relevant key observation is
that the forgetful functor $\catfont{Mon}_\pi(\catC) \to \catC$ is right
adjoint if conditions concerning the existence and preservation of a certain
limit are met (\cf \cite{mellies09}). By duality this induces a forgetful
functor $\catfont{Mon}_\pi(\catC^\cop)^\cop \to \catC$ which is furthermore
\emph{left} adjoint. Such an adjoint situation induces a comonad on $\catC$
which can be shown to be exponential (see~\cite{mellies09}).
Now, we are interested in extending these ideas to the graded setting with
$\mathcal{R}$ as the semiring of natural numbers. To this effect we recall next
the notion of a strict action.

\begin{definition}
        Let $\catM$ be a monoidal category and $\catC$ be an arbitrary
        category. A strict action is a functor $\oast : \catM \times \catC \to
        \catC$ that satisfies the following equations for all $\catM$-objects
        $m,n$ and $\catC$-objects $X$:
        \[
                X = I \oast X 
                \hspace{3cm}
                m \oast (n \oast X) = (m \otimes n) \oast X
        \]
\end{definition}
Consider then both a strict action $\oast : \catM \times \catA \to \catA$,
where $\catM$ is a discrete category, and an adjoint situation $L \dashv R :
\catA \to \catB$.  It is well-known that both constructions yield an
$\catM$-graded monad on $\catB$ with $T_n (X) = R (n \oast L X)$ (see details
in~\cite{fujii16}). This is the basis to extend~\cite{mellies09} to a graded
setting.

Specifically let $\Nats$-$\catfont{Mon}_\pi(\catC)$ be the category of
$(\Nats,+,0)$-graded commutative monoids in $\catC$.  Following an analogous
reasoning to the previous paragraphs, one may regard
($\Nats$-$\catfont{Mon}_\pi(\catC^\cop))^\cop$ as the category of
$(\Nats,+,0)$-graded commutative comonoids in $\catC$. There is also a
forgetful functor $(-)_1 : \Nats$-$\catfont{Mon}_\pi(\catC) \to \catC$ which
given a graded monoid only keeps the $1$-component of the underlying carrier.
Then under mild conditions, also pertaining to the existence and preservation
of a certain limit $\lim \Diag$ (details below), this functor is right adjoint. And
thus in particular
$(-)_1$ : ($\Nats$-$\catfont{Mon}_\pi(\catC^\cop))^\cop \to \catC$ is left
adjoint. Finally via a few routine calculations one can show the existence of a
strict action $\oast : (\Nats, \, \cdot \,,1) \times
\Nats$-$\catfont{Mon}_\pi(\catC) \to$ $\Nats$-$\catfont{Mon}_\pi(\catC)$
defined by, 
\[
        (k,((X_n)_{n \in \Nats}, e, f_{m,n} : X_m \otimes X_n \to X_{m+n}))
        \mapsto
        ((X_{n \cdot k})_{n \in \Nats}, e, 
        f_{m \cdot k, n \cdot k} :
        X_{m \cdot k} \otimes X_{n \cdot k} \to X_{(m+n) \cdot k})
\]
Together with the previous adjoint situation this yields  an $(\Nats, \, \cdot
\,, 1)$-graded comonad on $\catC$. By unfolding the respective definitions one
can show that this comonad is that of \emph{symmetric powers} described in a
very recent publication~\cite{lemay23} and stated to be exponential. Due to space constraints
we describe only the functorial component.  Subsequently we will show that this comonad is
Lipschitz under the condition that the aforementioned limit of $\Diag$ is
$\VCat$-enriched.

As an instructive first approximation of the $\N$-Lipschitz exponential comonad we intend to describe, consider the map,
\begin{align}\label{eq:defD}
        D: \Nats\to [\catC,\catC], n\mapsto \underbrace{\Id\otimes
        \ldots\otimes\Id}_{n\text{times}}.
\end{align}
The assignment $D$ \emph{almost} defines a canonical $(\Nats, \cdot, 1)$-graded
exponential comonad. 
\begin{theorem}\label{theo:D}
        The assignment $D$ of \eqref{eq:defD} satisfies all the conditions of
        Definition~\ref{defn:graded} except for symmetry in condition (i)
        and the last diagram of condition (ii).
\end{theorem}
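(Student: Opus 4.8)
The plan is to realise every structure morphism demanded by Definition~\ref{defn:graded} as a \emph{canonical} morphism of the symmetric monoidal category $\catC$ --- built only from associators, unitors, and symmetries, with no use of the data $\sem{f}$ --- and then to discharge each required diagram by Mac Lane's coherence theorem for symmetric monoidal categories. Concretely, with the conventions $D_0 X = I$, $D_1 X = X$ and $D_n X = X^{\otimes n}$, set $\epsilon \defeq \id : D_1 = \Id \to \Id$; let $\delta^{m,n} : D_{mn} \to D_m D_n$ be the canonical isomorphism $X^{\otimes mn} \xrightarrow{\sim} (X^{\otimes n})^{\otimes m}$ that cuts the $mn$ factors into $m$ consecutive blocks of $n$; let $e \defeq \id : D_0 \to I$ (both are the constant functor at the unit); let $d^{m,n} : D_{m+n} \to D_m \otimes D_n$ be the canonical iso $X^{\otimes(m+n)} \xrightarrow{\sim} X^{\otimes m}\otimes X^{\otimes n}$ splitting off the first $m$ factors; and let $\phi^{D_r}_{X,Y} : X^{\otimes r}\otimes Y^{\otimes r} \xrightarrow{\sim} (X\otimes Y)^{\otimes r}$ be the canonical ``interleaving'' isomorphism, with $\phi^{D_r} : I \to D_r I$ the evident coherence iso. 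The picture to keep in mind is that $D_n$ is ``tensoring over the ordered set $\{1,\dots,n\}$'', and that $\delta$, $e$, $d$ are induced by the canonical bijections $\{1,\dots,mn\}\cong\{1,\dots,m\}\times\{1,\dots,n\}$ and $\{1,\dots,m+n\}\cong\{1,\dots,m\}\sqcup\{1,\dots,n\}$.

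The single principle doing all the work is: by coherence, two parallel canonical morphisms are equal exactly when they induce the same permutation of the tensor factors. So each diagram to be checked reduces to comparing the two permutations of $X$-copies (or $(X\otimes Y)$-copies) determined by its two legs. Since $D_r$ is, as is well known, a symmetric monoidal endofunctor via $\phi^{D_r}$ and $\epsilon$ is trivially monoidal, what remains is exactly this bookkeeping.

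I would then go through the conditions in order. For the graded comonad axioms \eqref{eq:gradedcomonad}: the unit triangles hold because $D_1 = \Id$ and $X^{\otimes 1} = X$ force $\delta^{s,1}$, $\delta^{1,s}$ and $\epsilon$ to be identities, and the associativity pentagon holds because both legs nest the $s_1 s_2 s_3$ factors into the very same $s_1\times s_2\times s_3$ grid. Monoidality of $\delta$ and $\epsilon$ (so that $D$ factors through $\SMEndoC$, symmetry aside) is again a permutation count, both legs pairing the $i$-th copy of $X$ with the $i$-th copy of $Y$. For condition (i) minus symmetry: $e$ is the identity of a constant functor, and the oplax unit and associativity laws for $(e,d)$ are pure monoidal coherence. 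For condition (ii): in the first diagram both legs reduce to the canonical iso $I\xrightarrow{\sim}I^{\otimes n}$; the second is trivial since every arrow occurring in it is an identity of a constant functor; and the third commutes because $(n+m)s = ns+ms$ and cutting-then-regrouping the ordered factors agrees with regrouping-then-cutting.

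Finally I would point out why the two carved-out diagrams genuinely fail, showing the ``except'' is tight. The symmetry square of (i) asks $D_{m+n}\xrightarrow{d^{m,n}}D_m\otimes D_n\xrightarrow{\gamma}D_n\otimes D_m$ to coincide with $d^{n,m}$; but the former routes factors $m+1,\dots,m+n$ into the left slot while the latter routes factors $1,\dots,n$ there, so the induced permutations differ whenever $m,n>0$. The last diagram of (ii) fails for the same structural reason: its two legs reorder the $s(n+m)$ factors differently unless $s=1$ (the $\phi^{D_s}$-leg pairs the $j$-th $n$-block with the $j$-th $m$-block of a once-split $X^{\otimes sn}\otimes X^{\otimes sm}$, whereas the $\delta^{s,n+m}$-leg splits within each of $s$ size-$(n+m)$ blocks). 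The only real obstacle is organisational --- indexing the tensor factors carefully enough that each appeal to coherence is unambiguous, and staying disciplined about the fact that in a symmetric monoidal category not all diagrams of canonical maps commute, only those realising equal permutations.
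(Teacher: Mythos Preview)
Your proposal is correct and follows essentially the same approach as the paper: define all structure maps as canonical coherence isomorphisms and discharge the positive claims via symmetric monoidal coherence, then exhibit the two failures by showing the two legs induce different permutations. The only cosmetic difference is that the paper phrases the failure of diagram (4) as ``the counter-clockwise leg uses $\gamma$ via $\phi^{D_s}$ while the clockwise leg only re-brackets'', and pins the symmetry failure on the instance $m=n=1$ forcing $\gamma_{X,X}=\id_{X\otimes X}$; your index-tracking argument is an equivalent and slightly more explicit way of saying the same thing.
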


In order to construct an exponential comonad on $\catC$ one needs to remedy the
lack of symmetry of $D$. To do this, one can consider the sub-$\N$-graded
comonad of $D$ which only keeps the \emph{symmetric elements} in the tensor
products $D_n X=X^{\otimes n}$. For this we follow the second step of the
construction in~\cite{mellies09}. Every element $\sigma$ in the permutation
group $\Perm$ on $n$ elements defines a natural transformation $D_n\to D_n$
which we also denote by $\sigma$. We now define $E: \N\to [\catC,\catC]$ by
mapping $n\in\N$ to the limit $E_n$ of the diagram~\eqref{eq:Eobj} defined by all these natural transformations. 
Each $E_n$ is defined on morphisms
in the obvious way: if $f:X\to Y $ is a $\catC$-morphism then since
$D_nf\comp\sigma=\sigma\comp D_n f$, the universal property of $E_nY$
guarantees the existence of a unique $\catC$-morphism $E_n f$ that makes Diagram
\eqref{eq:Emor} commute.

\noindent\begin{minipage}{.4\linewidth}
	\begin{equation}\label{eq:Eobj}
		\xymatrix@C=50pt{
			E_n\ar[r]^{\eps} & D_n \ar@<5pt>[r]^(0.49){\sigma\in \Perm}_(0.49){}
			\ar@<-5pt>[r]_(0.48){\tau\in\Perm}
			\ar@{{}{ }{}}[r]|-(0.48){\dots}
			& D_n
		}
	\end{equation}
\end{minipage}%
\begin{minipage}{.6\linewidth}
	\begin{equation}\label{eq:Emor}
		\xymatrix@C=50pt{
			E_n X\ar[r]^{\eps_X} \ar@{..>}[d]_{E_nf}
			& D_n X\ar@<5pt>[r]^(0.49){\sigma\in\Perm}_(0.49){}
			\ar@<-5pt>[r]_(0.48){\tau\in\Perm}
			\ar@{{}{ }{}}[r]|-(0.48){\dots} \ar[d]_{D_n f}
			& D_n X\ar[d]^{D_n f} 
			\\
			E_n Y\ar[r]_{\eps_Y} 
			& D_n Y\ar@<5pt>[r]^(0.49){\sigma\in\Perm}_(0.49){}
			\ar@<-5pt>[r]_(0.48){\tau\in\Perm}
			\ar@{{}{ }{}}[r]|-(0.48){\dots} & D_n Y 
		}
	\end{equation}
\end{minipage}

\begin{theorem}\label{theo:E}
        Suppose that for every $\catC$-object $X$, $(X\otimes -)$ preserves the
        limits~\eqref{eq:Eobj}. Then the assignment $E$ defined by the
        limits~\eqref{eq:Eobj} induces a sub-$\N$-graded comonad of $D$ which is
        furthermore an $\N$-graded exponential comonad.
\end{theorem}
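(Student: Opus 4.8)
The plan is to prove Theorem~\ref{theo:E} in three stages: first establishing that $E$ is a sub-$\N$-graded comonad of $D$, then checking the graded exponential comonad structure, and finally verifying symmetry (the property that $D$ lacked). The starting point is Theorem~\ref{theo:D}, which tells us $D$ is already almost an exponential comonad; the job is to transport its structure along the limit cone $\eps : E_n \to D_n$.

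\textbf{Step 1: $E$ is a graded comonad.} I would first check that $E$ inherits counit $\epsilon : E_1 \to \Id$ and comultiplication $\delta^{m,n} : E_{mn} \to E_m E_n$ from $D$. For the counit, note $E_1$ is the limit of a diagram over the trivial group $\Perm[1]$, so $\eps : E_1 \to D_1$ is an isomorphism and the counit of $D$ transports directly. For $\delta^{m,n}$, the key point is that $D_m D_n X = X^{\otimes mn}$, and the relevant symmetries on $E_{mn}$ (the full group $\Perm[mn]$) restrict in particular to the "block" symmetries coming from $\Perm[m]$ acting on blocks and $\Perm[n]$ acting within blocks; one shows $\delta^{m,n}_D \comp \eps_{E_{mn}}$ equalises the $D_m D_n$-diagram, so it factors through $E_m E_n$ by the universal property (using that $(X \otimes -)$ preserves the limits \eqref{eq:Eobj}, which is exactly what guarantees $E_m E_n$ is again such a limit). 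The two coherence diagrams \eqref{eq:gradedcomonad} then follow from those for $D$ because $\eps$ is (componentwise, jointly over the diagram) monic enough to reflect the equalities — more precisely, one post-composes with $\eps$ and uses that $\eps$ is a limit cone.

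\textbf{Step 2: the exponential (comonoidal) structure.} Here I transport $e : D_0 \to I$ and $d^{m,n} : D_{m+n} \to D_m \otimes D_n$. Since $D_0 X = I$ (empty tensor), $E_0 \cong D_0 = I$ and $e$ is immediate. For $d^{m,n}$, the map $D_{m+n} X = X^{\otimes(m+n)} \to X^{\otimes m} \otimes X^{\otimes n}$ is just reassociation; precomposing with $\eps$ one checks it equalises the target diagram — using that a symmetric tensor can be split into a symmetric tensor of symmetric tensors, i.e. the image of $E_{m+n}$ lands in $E_m \otimes E_n$ — and invokes the limit-preservation hypothesis to see $E_m \otimes E_n$ is the limit computing the codomain. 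The interaction diagrams of Definition~\ref{defn:graded}(ii), including \emph{the last one which $D$ failed}, must now be rechecked for $E$; the point is that once we have genuinely symmetric elements, the previously-failing diagram commutes because the two composites differ precisely by a permutation, and permutations act as the identity on $E$. This is the place where restricting to $E$ buys us something over $D$, so I would single it out.

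\textbf{Step 3: symmetry, and the main obstacle.} Symmetry in Definition~\ref{defn:graded}(i) requires the symmetry isomorphism $\gamma : E_m \otimes E_n \to E_n \otimes E_m$ to commute with $d^{m,n}$ and $d^{n,m}$; since on $E$ every reindexing permutation acts trivially (by the defining universal property — any $\sigma \in \Perm$ satisfies $\sigma \comp \eps = \eps$, hence $\sigma$ restricted to $E_n$ is $\id_{E_n}$), this collapses to a routine check. I also need $E$ to be symmetric monoidal as a functor, which again reduces to permutations acting trivially on symmetric tensors. \textbf{The main obstacle} I anticipate is Step 1's construction of $\delta^{m,n}$ together with its coherence: one must verify that $D_m D_n$ of the limit $E_n$ is computed correctly (i.e. that the limit \eqref{eq:Eobj} defining $E_m$ composed with $E_n$ still equals the limit of the appropriate diagram on $D_{mn}$), which requires $(X \otimes -)$-preservation to be applied iteratively and carefully — the functors $D_m$ are built from $(- \otimes -)$, so preserving the limits $E_n$ under tensoring on either side is exactly what makes $D_m(E_n)$, and hence $E_m E_n$, behave. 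Getting the bookkeeping of which subgroups of $\Perm[mn]$ are relevant, and that the universal property of $E_{mn}$ indeed yields a map into $E_m E_n$ rather than merely $D_m E_n$ or $E_m D_n$, is the delicate combinatorial-diagrammatic heart of the argument; everything else is transport-of-structure along a limit cone.
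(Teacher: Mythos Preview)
Your proposal is correct and follows essentially the same approach as the paper: transport the structure of $D$ along the limit cone $\eps$, using the hypothesis that $(X\otimes -)$ preserves the limits \eqref{eq:Eobj} to identify $D_mE_n$ and $E_m\otimes E_n$ as the appropriate equalisers, and handle the previously-failing diagram by absorbing the obstructing permutation into $\eps[s(n+m)]$. The paper organises the argument slightly differently---it first shows each $E_n$ is symmetric monoidal, and it builds $\eta^{m,n}:E_{mn}\to E_mE_n$ in two explicit stages via an intermediate $\chi^{m,n}:E_{mn}\to D_mE_n$ (proving by an iterative argument that $D_m\eps$ is itself an equaliser)---but your identification of the ``main obstacle'' is exactly this point, and your sketch of why the last interaction diagram now commutes matches the paper's step $\{\star\}$ precisely.
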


We will now show that the graded comonad $E$ is additionally Lipschitz. First
we define the following scalar multiplication.
\begin{proposition}\label{prop:cansmu}
        For any commutative quantale $\V$, the map $\smu: \N\times\V\to\V$
        defined by, \[
	n\smu q = \underbrace{q\otimes \ldots\otimes q}_{n\text{ times}}\text{ if }n\neq 0\qquad 0\smu q= k
	\]
	is a scalar multiplication in the sense of Definition~\ref{defn:scalar}.
\end{proposition}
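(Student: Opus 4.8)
The plan is to verify directly the only requirement of Definition~\ref{defn:scalar}: that for each fixed $n \in \N$ the unary map $n \smu - : \V \to \V$ preserves joins. Since no interaction with the semiring operations is demanded, this is all that needs checking, and I would argue by the two clauses in the definition of $\smu$. For $n = 0$ the map $q \mapsto 0 \smu q$ is constant with value $k$, so for any non-empty family $\{q_i\}_{i \in I}$ we have $0 \smu (\bigvee_{i \in I} q_i) = k = \bigvee_{i \in I} k = \bigvee_{i \in I} (0 \smu q_i)$, which settles this case.

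For $n \geq 1$ we have $n \smu q = q^{\otimes n} = q \otimes \dots \otimes q$ ($n$ factors), and the main tool is the defining law of a quantale that $\otimes$ distributes over arbitrary joins in each variable. Iterating this $n-1$ times expands the $n$-fold tensor of a join as $(\bigvee_{i \in I} q_i)^{\otimes n} = \bigvee_{(i_1, \dots, i_n) \in I^n} q_{i_1} \otimes \dots \otimes q_{i_n}$. The inequality $(\bigvee_i q_i)^{\otimes n} \geq \bigvee_i q_i^{\otimes n}$ is then immediate, since the right-hand side is exactly the sub-join indexed by the diagonal tuples $(i, \dots, i)$. For the reverse inequality one must show that every mixed product $q_{i_1} \otimes \dots \otimes q_{i_n}$ lies below $\bigvee_l q_l^{\otimes n}$: taking the family $\{q_i\}_{i \in I}$ to be directed, choose $j \in I$ with $q_{i_m} \leq q_j$ for all $m \leq n$; then monotonicity of $\otimes$ (itself a consequence of distributivity) gives $q_{i_1} \otimes \dots \otimes q_{i_n} \leq q_j \otimes \dots \otimes q_j = q_j^{\otimes n}$. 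Joining over all tuples yields $(\bigvee_i q_i)^{\otimes n} \leq \bigvee_l q_l^{\otimes n}$, hence equality.

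The step I expect to be the crux is precisely this reverse inequality $(\bigvee_i q_i)^{\otimes n} \leq \bigvee_i q_i^{\otimes n}$: distributing the outer $n$-th power turns the left-hand side into a join of \emph{mixed} tensor products, and the work is to collapse these back to \emph{pure} $n$-th powers. The cofinality argument above disposes of directed joins cleanly; it is in pushing this through for joins that are not directed that one must lean on the standing hypotheses on $\V$ (in particular integrality, which bounds each mixed product above by every one of its factors), and that is the part I would treat most carefully. The remaining obligations — that $\smu$ is a well-defined function and that the continuity/basis assumptions on $\V$ are never violated — are routine.
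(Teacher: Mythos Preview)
Your plan for directed families is exactly the paper's argument: the paper bounds each mixed product via monotonicity, $x_1\otimes\dots\otimes x_n \leq \bigl(\bigvee\{x_1,\dots,x_n\}\bigr)^{\otimes n}$, which is precisely your ``pick an upper bound $q_j$'' step once the finite join $\bigvee\{x_1,\dots,x_n\}$ lies in the indexing family $X$. The paper does not treat the non-directed case separately --- it records this inequality and concludes --- so on the part that is actually argued, you and the paper coincide.

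The crux you flag for non-directed joins is a real gap, and your proposed remedy via integrality does not close it. Integrality gives $x_1\otimes\dots\otimes x_n \leq x_i$ for each $i$, but no $x_i$ need lie below $\bigvee_j x_j^{\otimes n}$ (the inequality $x_i\leq x_i^{\otimes n}$ typically fails), so this bound points the wrong way. Worse, the step can fail outright even under the paper's standing hypotheses: in the commutative, unital, integral quantale of up-sets of $\N^2$ with Minkowski-sum tensor, taking $p=\mathord{\uparrow}(1,0)$ and $q=\mathord{\uparrow}(0,1)$ gives $p\otimes q=\mathord{\uparrow}(1,1)\not\leq p^{\otimes 2}\vee q^{\otimes 2}=\mathord{\uparrow}(2,0)\cup\mathord{\uparrow}(0,2)$, whence $2\smu(p\vee q)\neq(2\smu p)\vee(2\smu q)$. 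So neither your integrality idea nor the paper's displayed bound suffices for arbitrary joins; what is actually established --- by you and by the paper --- is preservation of \emph{directed} joins only.
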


\begin{proof}
        To see that $n\smu - $ preserves arbitrary joins we compute,
	\begin{flalign*}
                & \, n \smu \left (\bigvee X \right ) & \\
		& \defeq \left (\bigvee X \right ) \otimes \dots \otimes \left (\bigvee X \right ) & \\
		& = \bigvee \left (X \otimes \dots \otimes X \right ) &
		\text{\{$\otimes$ preserves joins\}} \\
		& = \bigvee \{ x_1 \otimes \dots \otimes x_n \mid x_1,\dots,x_n \in X \} & \\
		& = \bigvee \{ x \otimes \dots \otimes x \mid x \in X \} & 
		\text{$\{\star\}$} \\
		& \defeq \bigvee n \smu X &
	\end{flalign*}
        where the step marked with $(\star)$ follows from the fact that the
        inequation below holds.
	\[
	x_1 \otimes \dots \otimes x_n \leq \left
	( \bigvee \{x_1,\dots,x_n\} \right ) \otimes \dots \otimes \left ( \bigvee
	\{x_1,\dots,x_n\} \right )
        \hspace{1.5cm}
        (x_1,\dots,x_n \in X)
	\]
\end{proof}

Next, let $\catC$ be a $\VCat$-autonomous category and the underlying diagram
of~\eqref{eq:Eobj} for a $\catC$-object $X$ be denoted by $\Diag$. Also assume that for every two cones
$f,g : A \to X^{\otimes n}$ for $\Diag$ the equation $a(f,g) = a(f', g')$
holds where $f', g' : A \to E_n(X)$ are the corresponding mediating morphisms.
More compactly this amounts to the statement that $\catC$ has the $\VCat$-limit of
$\Diag$ weighted by the functor $!$ (constant on the $\VCat$-object $1$).  
This condition guarantees that $E$ is $\N$-Lipschitz.

\begin{theorem}
        \label{theo:Lip}
        Consider a $\VCat$-autonomous category $\catC$ such that it has the
        $\VCat$-limit of $\Diag$ weighted by $!$ and additionally assume that for every
        $\catC$-object $X$ the functor $(X \otimes -)$ preserves this limit, then $E$ is an 
        $\N$-Lipschitz exponential comonad.
\end{theorem}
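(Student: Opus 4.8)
The plan is to combine Theorem~\ref{theo:E}, which already gives us that $E$ is an $\N$-graded exponential comonad under the stated preservation hypothesis, with a single new verification: the Lipschitz inequality $n \smu a(f,g) \leq a(E_n f, E_n g)$ for all $\catC$-morphisms $f, g : X \to Y$ and all $n \in \N$, where $\smu$ is the canonical scalar multiplication of Proposition~\ref{prop:cansmu}. By Definition~\ref{def:Lipschitz} this is precisely what is missing, so the whole proof reduces to this inequality. I would handle the case $n = 0$ separately: there $0 \smu a(f,g) = k$, and since $\catC$ is $\VCat$-enriched we have $k \leq a(E_0 f, E_0 g)$ automatically (every hom-object is a $\VCat$-category, so $k$ is below every self-distance, and $E_0 f = E_0 g = \id$ anyway as $E_0 = \Id$... more carefully, $E_0 X = I$ and the mediating map is forced, so $E_0 f$ is the identity on $I$). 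For $n \geq 1$ the target is $n \smu a(f,g) = a(f,g)^{\otimes n} \leq a(E_n f, E_n g)$.

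The key steps for $n \geq 1$ are as follows. First, recall from the $\VCat$-enrichment of $\catC$ that $\otimes$ is a $\VCat$-functor, so iterating this gives $a(f,g)^{\otimes n} \leq a(D_n f, D_n g) = a(f^{\otimes n}, g^{\otimes n})$ in the hom-$\VCat$-category $\catC(X^{\otimes n}, Y^{\otimes n})$. Second, I would use the defining commuting square~\eqref{eq:Emor}: the composite $\eps_Y \comp E_n f$ equals $D_n f \comp \eps_X$, i.e.\ $E_n f$ is the mediating morphism into the limit $E_n Y$ of the cone $D_n f \comp \eps_X : E_n X \to D_n Y$ (and similarly for $g$). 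The crucial hypothesis of the theorem is that $\catC$ has the $\VCat$-limit of $\Diag$ weighted by $!$, which was unpacked in the text as: for any two cones $h_1, h_2 : A \to X^{\otimes n}$ for $\Diag$, the distance between the cones equals the distance between their mediating morphisms into $E_n X$, that is $a(h_1, h_2) = a(h_1', h_2')$ with $h_i' : A \to E_n X$. Applying this with $A = E_n X$ and the two cones $D_n f \comp \eps_X$ and $D_n g \comp \eps_X$ — whose mediating morphisms are exactly $E_n f$ and $E_n g$ — yields
\[
a(E_n f, E_n g) = a(D_n f \comp \eps_X,\; D_n g \comp \eps_X).
\]
Third, by $\VCat$-functoriality of composition (precomposition with $\eps_X$ is a $\VCat$-functor $\catC(X^{\otimes n}, Y^{\otimes n}) \to \catC(E_n X, Y^{\otimes n})$, so it is non-expansive) together with the bound from the first step, we get
\[
a(f,g)^{\otimes n} \leq a(D_n f, D_n g) \leq a(D_n f \comp \eps_X,\; D_n g \comp \eps_X) = a(E_n f, E_n g),
\]
which is the desired inequality $n \smu a(f,g) \leq a(E_n f, E_n g)$.

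The main obstacle, as I see it, is bookkeeping around the weighted-limit hypothesis rather than any deep difficulty: one must be careful that $D_n f \comp \eps_X$ and $D_n g \comp \eps_X$ genuinely are cones for $\Diag$ (this is the compatibility $D_n f \comp \sigma = \sigma \comp D_n f$ used already in Theorem~\ref{theo:E} to define $E_n f$), and that the mediating morphism of the cone $D_n f \comp \eps_X$ is literally $E_n f$ and not merely isomorphic to it — this is immediate from the uniqueness in the universal property and the commuting square~\eqref{eq:Emor}. A secondary point is to confirm that $(X \otimes -)$ preserving the limit~\eqref{eq:Eobj} is exactly the hypothesis already needed in Theorem~\ref{theo:E} to make $E$ exponential, so no new assumption on the monoidal structure is introduced; the only genuinely new assumption is the enriched (weighted-limit) strengthening, and the argument above shows it is used in precisely one place. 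Everything else is inherited verbatim from Theorem~\ref{theo:E}.
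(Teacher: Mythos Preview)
Your proposal is correct and follows essentially the same approach as the paper's own proof: both first use the $\VCat$-enrichment of $\otimes$ to obtain $a(f,g)^{\otimes n} \leq a(D_n f, D_n g)$, then use enriched composition to pass to $a(D_n f \comp \eps_X, D_n g \comp \eps_X)$, and finally invoke the $\VCat$-weighted limit property to identify this with $a(E_n f, E_n g)$. Your treatment is simply more explicit about the $n=0$ case and about verifying that $D_n f \comp \eps_X$ is a cone with mediating morphism $E_n f$, but the argument is the same.
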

\begin{proof}
        Consider two $\catC$-morphisms $f,g : X \to Y$.  We reason,
	\begin{flalign*}
                & \, n\smu a(f,g) &  \\
		& \defeq \underbrace{a(f,g) \otimes \ldots\otimes 
                a(f,g)}_{n\text{ times}} & \\
		&\leq a(f^{\otimes n}, g^{\otimes n}) &
                \{\text{$\otimes$ in $\catC$ is $\VCat$-enriched}
                        \} \\
		&\defeq a(D_n f, D_n g)&\\
		&\leq a(D_nf\comp \eps_X, D_n g\comp \eps_X)&
                \{\text{$\catC$ is $\VCat$-enriched}\} \\
                &= a(E_n f, E_n g)&\text{\{limit of $\Diag$ is $\VCat$-enriched\}}
	\end{flalign*}
\end{proof}

\section{Applications to timed and probabilistic computation}\label{sec:examples}

\newcommand{\Dil}{\mathrm{Dil}}

\subsection{Timed computation and dilations}
We now revisit the example of wait calls from \S\ref{sec:intro}
and equip it with a concrete model by applying the canonical
construction of $\N$-Lipschitz exponential comonads detailed in
\S\ref{sec:canonical}.  Recall that the example is based on a ground type $X$ and a
signature $\{ \prog{wait_n} : X \to X \mid n \in \Nats \}$ of wait calls.
Consider then the following metric axioms proposed in~\cite{dahlqvist22}:
\begin{flalign}\label{ax} 
        \prog{wait_0}(x) =_0
        x \hspace{1cm} \prog{wait_n}(\prog{wait_m}(x)) =_0 \prog{wait_{n + m
                }}(x) \hspace{1cm} \infer{\prog{wait_n}(x) =_\epsilon
        \prog{wait_m}(x)}{\epsilon = |m - n|} 
\end{flalign} 
In order to apply the construction in \S\ref{sec:canonical},  we need first of
all a $\Met$-enriched autonomous category. For this case we choose
$\Met$ itself (\emph{cf.}\ Example~\ref{ex:pos_met}).  Next we show that the
tensor $\otimes$ in $\Met$ preserves all limits; actually we prove the
following more general claim.
\begin{proposition}\label{prop:tensor_preservation}
       Let $\V$ be a quantale whose operation $\otimes$ preserves arbitrary
       meets and let us consider the respective category $\VCat$. For every
       $\V$-category $X$ the functor $(- \otimes X) : \VCat \to \VCat$
       preserves all limits.  The same property holds for the cases $\VCatSe$,
       $\VCatSy$, and $\VCatSS$.
\end{proposition}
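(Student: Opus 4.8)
The plan is to describe the limit in $\VCat$ explicitly and then reduce the statement to a single distributivity law between $\otimes$ and meets. Recall that the forgetful functor $\UForg : \VCat \to \Set$ creates limits: for a diagram $D : J \to \VCat$, a limiting cone has apex $L$ with underlying set $\lim_\Set \UForg D$, projections $\pi_j : L \to D_j$ the evident functions, and $\V$-category structure the pointwise meet along the projections,
\[
  a_L(\ell,\ell') \;=\; \bigwedge_{j \in J} a_{D_j}(\pi_j\ell,\pi_j\ell'),
\]
which is a $\V$-category using only monotonicity of $\otimes$ and is the largest structure making every $\pi_j$ a $\V$-functor.

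First I would apply $(-\otimes X)$ to this cone, obtaining a cone $(\pi_j \otimes \id_X : L \otimes X \to D_j \otimes X)_{j \in J}$ over the diagram $D \otimes X : j \mapsto D_j \otimes X$. On underlying sets this cone is the image of the $\Set$-limiting cone under $(- \times \UForg X)$, and since $(- \times \UForg X) : \Set \to \Set$ preserves limits it is still limiting at the level of sets. Hence, after the obvious identification of underlying sets, the canonical comparison $\kappa : L \otimes X \to \lim(D \otimes X)$ is the identity function; being a comparison morphism into a limit it is a $\V$-functor, so it is an isomorphism in $\VCat$ exactly when its inverse is a $\V$-functor too, \ie when the $\V$-category structures on the two apices agree. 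The structure carried by $L \otimes X$ sends $((\ell,x),(\ell',x'))$ to $\big(\bigwedge_{j} a_{D_j}(\pi_j\ell,\pi_j\ell')\big) \otimes a_X(x,x')$, whereas that of $\lim(D \otimes X)$ sends it to $\bigwedge_{j}\big(a_{D_j}(\pi_j\ell,\pi_j\ell') \otimes a_X(x,x')\big)$. These coincide precisely because $(-) \otimes a_X(x,x')$ preserves the meet $\bigwedge_j a_{D_j}(\pi_j\ell,\pi_j\ell')$: the inequality $\leq$ is automatic from monotonicity of $\otimes$, and $\geq$ is exactly the hypothesis that $\otimes$ preserves arbitrary meets ($\otimes$ preserving meets in one variable is, by commutativity, the same as preserving them in the other). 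Thus $\kappa$ is an isomorphism and $(-\otimes X)$ preserves the limit.

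For the variants $\VCatSe$, $\VCatSy$, $\VCatSS$, I would note that each is a full subcategory of $\VCat$ closed under limits: a pointwise meet of symmetric hom-values is symmetric, and the meet structure inherits antisymmetry of the natural order (indeed $\VCatSe \hookrightarrow \VCat$ is reflective, as recalled earlier), so $\VCatSS = \VCatSy \cap \VCatSe$ is closed under limits as well. Consequently the inclusions into $\VCat$ create limits; moreover, as $\V$ is integral, the tensor on each of these subcategories is the restriction of the tensor of $\VCat$, and a tensor of objects of the subcategory again lies in it. Therefore a limiting cone in the subcategory is still limiting in $\VCat$, its $(-\otimes X)$-image is limiting in $\VCat$ by the case just treated, and since that image lies in the subcategory --- whose inclusion creates limits --- it is limiting in the subcategory too.

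I do not expect a deep obstacle: the mathematical content is concentrated in the identity $(\bigwedge_j v_j)\otimes c = \bigwedge_j(v_j \otimes c)$, which is the hypothesis on $\V$. The only point demanding care is the set-up --- verifying that the comparison $\kappa$ really is the identity on underlying sets, so that ``$(-\otimes X)$ preserves this limit'' genuinely reduces to equality of $\V$-category structures rather than to any bijectivity issue --- together with the routine but necessary bookkeeping that the symmetric/separated subcategories are stable under limits and that the tensor on them is the restriction of the one on $\VCat$.
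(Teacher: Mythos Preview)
Your proposal is correct and takes essentially the same approach as the paper: both reduce to $\Set$ via the forgetful functor, use the explicit pointwise-meet formula for limits in $\VCat$, and identify the key step as the distributivity $(\bigwedge_j v_j)\otimes c = \bigwedge_j(v_j\otimes c)$ supplied by the hypothesis. Your phrasing via the comparison map being the identity on underlying sets is slightly more streamlined than the paper's direct verification of the universal property, and you treat the subcategory variants more explicitly than the paper does, but the mathematical content is the same.
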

\begin{corollary}
       For all categories $\catC$ mentioned in Example~\ref{ex:pos_met} (which
       includes $\Met$) and $\catC$-objects $X$ the functor $(- \otimes X) :
       \catC \to \catC$ preserves all limits.
\end{corollary}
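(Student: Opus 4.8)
The plan is to read the statement off Proposition~\ref{prop:tensor_preservation} by instantiating $\V$. First I would recall the concrete identifications established earlier in the text: for $\V$ the Boolean quantale one has $\Pos \cong \VCatSe$ and $\Set \cong \VCatSS$, and for $\V$ the metric quantale one has $\Met \cong \VCatSS$; these are exactly the three categories enumerated in Example~\ref{ex:pos_met}. Since the tensor of $\V$-categories is symmetric, $(- \otimes X) \cong (X \otimes -)$ as endofunctors, so it is enough to verify the single hypothesis of Proposition~\ref{prop:tensor_preservation} — that the quantale multiplication preserves arbitrary meets — for the Boolean quantale and for the metric quantale.

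Next I would discharge that hypothesis, which is routine in both cases. For the Boolean quantale, $\otimes$ is the binary meet on the two-element chain $\{0 \leq 1\}$, and in any complete lattice $x \wedge \bigwedge_{i} y_i = \bigwedge_{i}(x \wedge y_i)$ holds for every nonempty family, so there is nothing to do. For the metric quantale $([0,\infty],\wedge)$ the meets of the underlying lattice are the numeric suprema, and for any $q \in [0,\infty]$ and any nonempty family $(q_i)_i$ one has $q + \sup_i q_i = \sup_i(q + q_i)$, by a short case split according to whether $q$ or $\sup_i q_i$ is $\infty$; hence $q \otimes (-) = q + (-)$ preserves meets. Plugging the Boolean quantale into the $\VCatSe$ and $\VCatSS$ clauses of Proposition~\ref{prop:tensor_preservation} gives the cases $\catC = \Pos$ and $\catC = \Set$, and plugging the metric quantale into the $\VCatSS$ clause gives $\catC = \Met$; together these exhaust Example~\ref{ex:pos_met}.

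I do not expect a real obstacle, since all of the work is carried by Proposition~\ref{prop:tensor_preservation} and the corollary merely records its instances. The one point deserving care is the empty diagram: as $I \otimes X \cong X$ and $I$ is terminal in each of $\Pos$, $\Met$, $\Set$, preservation of the terminal object would force $X \cong I$, so "all limits" is to be read as the connected limits actually used in Section~\ref{sec:canonical} — in particular the equalizer-shaped limits~\eqref{eq:Eobj} — and it is for those that the argument behind Proposition~\ref{prop:tensor_preservation} applies uniformly.
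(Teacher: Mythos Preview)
Your approach is essentially the paper's: the corollary is not given a separate argument in the paper, it is simply an instance of Proposition~\ref{prop:tensor_preservation}, and you spell out exactly that instantiation and check the meet-preservation hypothesis for the Boolean and metric quantales. That part is correct and matches the intended proof.

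Your final paragraph is not a defect in your argument but a genuine observation about the paper's statement. As you note, $I$ is both the tensor unit and the terminal object in $\Pos$, $\Met$, $\Set$, so $(-\otimes X)$ cannot preserve the empty limit unless $X\cong I$; and in $\Set$ the tensor is the Cartesian product, so $(-\times X)$ does not preserve binary products either. Correspondingly, the step in the proof of Proposition~\ref{prop:tensor_preservation} that invokes ``limits commute with limits'' for the cone $(p_i\times\id)$ only goes through when the index category is connected. So the phrase ``all limits'' in both the Proposition and the Corollary should be read as the connected limits actually needed in \S\ref{sec:canonical} (the equaliser-shaped diagrams~\eqref{eq:Eobj}), exactly as you say. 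This is a sharper reading than the paper gives, not a gap in your proof.
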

Finally, it is straightforward to prove that $\Met$ has the $\VCat$-limit of
$\Diag$ weighted by $!$ and therefore all pre-requisites of the construction
are satisfied. By unfolding the respective definitions we deduce that $E_n(X)$
is the metric space whose elements are $n$-copies $(x,\dots,x)$ of an element
$x \in X$ and whose metric is the restriction of the metric in $X^{\otimes n}$.
The counit is the identity and comultiplication amounts to rebracketing. The
operation $d^{m,n}$ amounts to rebracketing as well.  It is then easy to build
a model for the metric theory of wait calls that was previously presented: fix
a metric space $A$, interpret the ground type $X$ as $\Nats \otimes A$ and the
operation symbol $\prog{wait_n} : X \to X$ as the non-expansive map $
\sem{\prog{wait_n}} : \Nats \otimes A \to \Nats \otimes A, (i,a) \mapsto (i
+ n, a)$.  It only remains to prove that the axioms in~\eqref{ax} are satisfied
by the proposed interpretation, but this can be shown via a few routine
calculations.

We end this subsection by relating the comonad that we canonically obtained to
the comonad of dilations presented in~\cite{katsumata18}. The latter's main
idea is that of distance dilation: given a metric space $(X,d)$ we obtain a new
one $\mathrm{Dil}_n(X,d) := (X, n \smu d)$ by scaling up distances via multiplication, more
concretely $(n \smu d)(x,y) = n \smu d(x,y)$ for $n \in \Nats$
and $x,y \in X$.  It is easy to see that $E_n \cong \mathrm{Dil}_n$ and
moreover that the underlying comonadic operations agree.  It is also  easy to
see that the copy, discard, and monoidal operations agree as well.  This yields
the following result.

\begin{corollary}
        \label{cor:met}
        The $\Met$-autonomous category $\Met$ of metric spaces and
        non-expansive maps equipped with the comonad of dilations
        yields a model of the metric theory of wait calls~\eqref{ax}.
\end{corollary}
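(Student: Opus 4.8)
The plan is to collect the structural facts already assembled in \S\ref{sec:canonical} and earlier in this subsection, and then discharge the single genuinely new obligation: that the axioms in \eqref{ax} are satisfied by the proposed interpretation. First I observe that the metric theory of wait calls is a \emph{symmetric} graded $\V\lambda$-theory, so by Definition~\ref{defn:model} a model lives in a $\VCatSS$-autonomous category equipped with an $\N$-Lipschitz exponential comonad. For the carrier I take $\Met$, which is $\VCatSS$-autonomous for $\V$ the metric quantale (Example~\ref{ex:pos_met}). To equip it with the requisite comonad I invoke Theorem~\ref{theo:Lip}, whose hypotheses hold here: $\Met$ has the $\VCat$-limit of $\Diag$ weighted by $!$ (a direct unfolding of \eqref{eq:Eobj}, as noted above), and $(X \otimes -)$ preserves that limit for every metric space $X$ --- which follows from Proposition~\ref{prop:tensor_preservation} and its corollary applied to $\Met$, together with the symmetry of $\otimes$ in $\Met$, which makes $X \otimes -$ and $- \otimes X$ naturally isomorphic. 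Theorem~\ref{theo:Lip} then supplies the comonad $E$; since $E_n \cong \Dil_n$ and this isomorphism intertwines the counit, comultiplication, copy, discard and monoidal operations of the two presentations, the comonad of dilations is itself an $\N$-Lipschitz exponential comonad, and a model built on $(\Met, E)$ is literally a model built on $(\Met, \Dil)$.

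It then remains to fix the interpretation $\sem{X} = \Nats \otimes A$ (for a chosen metric space $A$) and $\sem{\prog{wait_n}} : \Nats \otimes A \to \Nats \otimes A$, $(i,a) \mapsto (i + n, a)$ --- each shift being non-expansive, since it is an isometry in the first coordinate and the identity in the second --- and to check the three families of axioms of \eqref{ax}. For the first two families the two sides denote the \emph{same} $\Met$-morphism: $\sem{\prog{wait_0}(x)}$ is $(i,a) \mapsto (i + 0, a)$, \ie $\sem{x} = \id_{\Nats \otimes A}$, and $\sem{\prog{wait_n}(\prog{wait_m}(x))}$ is the shift by $n + m$, \ie $\sem{\prog{wait_{n+m}}(x)}$; as the metric is separated it assigns distance $0$ exactly to pairs of equal morphisms, so these axioms (whose label is $0$) hold. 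For the third family, using that the hom-object carries the supremum metric and that $d$ on $\Nats \otimes A$ is $d_{\Nats} \otimes d_A$, the distance $a(\sem{\prog{wait_n}(x)}, \sem{\prog{wait_m}(x)})$ evaluates to $\bigwedge_{(i,a)} \bigl(\lvert n - m\rvert + 0\bigr) = \lvert n - m\rvert$, which equals the label $\epsilon = \lvert m - n\rvert$, so the required bound holds (with equality).

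The one point deserving a little care --- rather than a true obstacle --- is the claim that the isomorphism $E_n \cong \Dil_n$ is compatible with \emph{all} of the exponential-comonad structure of Definition~\ref{defn:graded} (the graded comonad laws, the graded commutative comonoid, and the two interaction diagrams), so that transporting the model along it is legitimate; this was flagged as ``easy to see'' above and amounts to noting that the structure maps on both sides are built from rebracketing isomorphisms, which any such isomorphism respects. Everything else is elementary computation with the additive product metric on $\Nats \otimes A$ and the supremum metric on hom-objects, and the corollary follows.
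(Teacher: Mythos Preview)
Your proof is correct and follows essentially the same approach as the paper: assemble the $\VCatSS$-autonomous structure on $\Met$, apply Theorem~\ref{theo:Lip} (via Proposition~\ref{prop:tensor_preservation} and the weighted limit of $\Diag$), transport along $E_n \cong \Dil_n$, and verify the axioms~\eqref{ax} by direct computation. The only difference is that you spell out the axiom checks explicitly (the paper leaves them as ``routine calculations'') and flag the compatibility of the $E_n \cong \Dil_n$ isomorphism with the exponential-comonad structure as a point of care, which the paper simply asserts.
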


\subsection{Probabilistic computation}
\newcommand{\nat}{\prog{nat}}
\newcommand{\real}{\prog{real}}
\newcommand{\normal}{\prog{normal}}
\newcommand{\meas}{\mathcal{M}}
\newcommand{\bern}{\mathtt{bernoulli}}
\newcommand{\ptp}{\widehat{\otimes}_{\pi}}

\cite[Example 28]{dahlqvist22} presents a metric equational system to reason
about the total variation distance between  distributions constructed as
probabilistic programs, specifically individual steps in non-standard random
walks. It is however cumbersome to reason about distances between random walks
consisting of $n$ steps when they are expressed in a purely linear language.
This is because a probabilistic term like $\normal(0,1)$ operationally
corresponds to a \emph{single} sample which cannot be copied. Thus, to write a
program using $n$ normal deviates we need to call $n$ \iid samples from
$\underbrace{\normal(0,1)\otimes \ldots\otimes \normal(0,1)}_{n\text{ times}}$
which is inconvenient and unclear (especially for large values of $n$), but also
difficult to maintain and generalise. Using a graded system, we can not only
assume a clean and parametric access to such \iid samples but also to more
complex sampling schemes (details below). Furthermore, we have a convenient way
of manipulating such sequences of samples via the promotion rule, and to feed
them into $n$-ary functions through the copy (\ie contraction) rule. All of
this whilst maintaining the ability to reason about distances between programs.

Let us illustrate our previous remarks with some simple examples.  We start by
briefly presenting a toy probabilistic language (more details can be found
in~\cite{dahlqvist22}).  We consider only two ground
types $\real$ and $\real^+$ (in particular, we will view the integers 0 and 1 
as reals). The graded modal type $\g{n} \real$ can then be thought of as the
type of $n$ real samples. We also consider a signature of operations consisting
of the real numbers $\{ r : \typeI \to \real \mid r \in \mathbb{Q} \}$, the
addition and multiplication operations $+,\ast: \real,\real\to\real$, and
finally three collections of built-in samplers which we detail next.  The first
collection consists of samplers returning $k$ samples from an urn containing
$m$ balls labelled 0 and $n$ balls labelled 1 \emph{with replacement} (\ie we
return the ball to the urn after reading its value).  We denote the samplers of
this class $\prog{replace}(k,m,n): \, \g{k} \real$. The second collection
samples from the same urn model but \emph{without replacement}. We denote these
samplers $\prog{no\_replace}(k,m,n): \, \g{k} \real$ (and of course require
that $k\leq m+n$). The third class $\prog{iid\_normal}(k;\mu,\sigma):
\real,\real^+ \to \, \g{k} \real$ will simply sample $k$ \iid normal deviates. 

We proceed by providing a concrete graded $\lambda$-model for the language.
First we fix the category $\Ban$ of Banach spaces and linear contractions as
our $\Met$-enriched autonomous category (see~\cite{K79c,DK20a,DSK20a,dahlqvist22} for
more details about this style of semantics). Specifically $\Ban$ is autonomous
when equipped with the projective tensor product $\hat \otimes_\pi$ and the
internal hom $\multimap$ defined as the space of \emph{bounded} linear maps
equipped with the sup-norm~\cite{ryan13}. It is also straightforward to prove
that $\Ban$ has the $\Met$-limit of $\Diag$ weighted by $!$. Then in order to
apply the construction in $\S\ref{sec:canonical}$ we use the following result.
\begin{proposition}\label{prop:BanTensor}
        For every Banach space $W$ and every $n\in\N$, the functor $(- \otimes
        W) : \Ban \to \Ban$ preserves the limit of diagram \eqref{eq:Eobj}
        which defines $E_n$ in terms of all the permutations
        $\sigma\in\Perm$.  
\end{proposition}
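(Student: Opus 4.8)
The plan is to recognise that, at a fixed Banach space $X$, the limit of \eqref{eq:Eobj} is cut out by a contractive idempotent and is therefore an \emph{absolute} limit, hence preserved by every functor, in particular by $(- \otimes W)$. Concretely, write $A \defeq X^{\otimes n}$ for the $n$-fold projective tensor power and recall that each permutation $\sigma \in \Perm$ acts on $A$ as a linear isometry, by symmetry of the projective tensor norm. Hence the averaging operator $P_X \defeq \tfrac{1}{n!}\sum_{\sigma \in \Perm}\sigma : A \to A$ is a contraction, and it is idempotent since $\tau \comp P_X = P_X$ for every $\tau \in \Perm$. From $\tau \comp P_X = P_X$ one also gets $\img P_X = \{\xi \in A : \sigma(\xi) = \xi \text{ for all }\sigma \in \Perm\}$; equipped with the norm inherited from $A$, this closed subspace is exactly the joint equalizer in $\Ban$ of the family $\{\sigma : \sigma \in \Perm\}$, i.e. the object $E_n X$ together with its limiting cone $\epsilon_X : E_n X \to A$. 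Naturality of $P_X$ in $X$ makes this identification compatible with the action of $E_n$ on morphisms given by \eqref{eq:Emor}.

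Next I would use that $\Ban$ is Cauchy-complete: every contractive idempotent $e : A \to A$ splits as $A \xrightarrow{\,e\,} \img e \hookrightarrow A$, where $\img e = \ker(\id_A - e)$ carries the subspace norm. Thus the limit cone of \eqref{eq:Eobj} at $X$ is, up to the canonical isometry $E_n X \cong \img P_X$, the splitting of the idempotent $P_X$. Since splittings of idempotents are absolute colimits and limits — any functor $F$ sends a splitting $(\img e, e, \iota)$ of $e$ to a splitting $(F(\img e), Fe, F\iota)$ of $Fe$ — applying this to $F = (- \otimes W)$ shows that $(- \otimes W)$ sends the limit cone of \eqref{eq:Eobj} at $X$ to the splitting of
\[
  P_X \otimes \id_W = \tfrac{1}{n!}\sum_{\sigma \in \Perm}(\sigma \otimes \id_W) : A \otimes W \longrightarrow A \otimes W,
\]
so that $E_n X \otimes W$, with leg $\epsilon_X \otimes \id_W$, is isometrically and compatibly the image of $P_X \otimes \id_W$ inside $A \otimes W$.

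Finally I would identify $\img(P_X \otimes \id_W)$ with the limit of the diagram obtained from \eqref{eq:Eobj} by applying $(- \otimes W)$. That transported diagram has apex $A \otimes W$ and parallel arrows $\sigma \otimes \id_W$ for $\sigma \in \Perm$, so its joint equalizer in $\Ban$ is $\{\eta \in A \otimes W : (\sigma \otimes \id_W)(\eta) = \eta \text{ for all }\sigma\}$ with the subspace norm; and, exactly as in the first paragraph, the relation $\sigma \comp (P_X \otimes \id_W) = P_X \otimes \id_W$ shows this set equals $\img(P_X \otimes \id_W)$. This is precisely what $E_n X \otimes W$ was shown to be, so $(- \otimes W)$ preserves the limit \eqref{eq:Eobj}, as required.

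Two remarks. The coefficient $\tfrac{1}{n!}$ requires $\mathbb{Q}$ to sit in the scalar field, which holds for $\Ban$ over $\mathbb{R}$ or $\mathbb{C}$; this is the only place characteristic zero intervenes, and it is exactly what forces the limit to be absolute. This absoluteness is the main point: in general $(- \otimes W)$ is a left adjoint and so has no reason to preserve limits, so the work is entirely in isolating the idempotent-splitting structure of this particular limit. If one prefers to avoid the abstract argument, the isometry $E_n X \otimes W \cong \img(P_X \otimes \id_W)$ can be checked by hand via the infimum formula for $\lVert\cdot\rVert_\pi$: given $\eta$ in the image and a representation $\eta = \sum_i a_i \otimes b_i$ with $\sum_i \lVert a_i\rVert\,\lVert b_i\rVert$ near $\lVert\eta\rVert_\pi$, applying $P_X \otimes \id_W$ and using $(P_X \otimes \id_W)(\eta) = \eta$ gives $\eta = \sum_i (P_X a_i) \otimes b_i$ with $P_X a_i \in E_n X$ and $\sum_i \lVert P_X a_i\rVert\,\lVert b_i\rVert \le \sum_i \lVert a_i\rVert\,\lVert b_i\rVert$, while the reverse inequality is functoriality of $(- \otimes W)$ on the isometric inclusion $E_n X \hookrightarrow A$.
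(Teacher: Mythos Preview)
Your proposal is correct and takes essentially the same approach as the paper: both hinge on the symmetrisation operator $\tfrac{1}{n!}\sum_{\sigma\in\Perm}\sigma$ being a contractive retraction of $\eps$, which makes the equaliser split and hence absolute (preserved by any functor, in particular $(-\otimes W)$). Your presentation is framed more abstractly via idempotent splittings and Cauchy-completeness, whereas the paper verifies the factorisation of an arbitrary cone through $\eps\otimes\id$ by hand using $(\partial^n\otimes\id)\comp f$, but the underlying idea is identical.
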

\begin{proof}The proof is inspired by an analogous one in~\cite{DK20a} and hinges on the
fact that the contraction $\eps : E_n(V) \to V^{\otimes n}$ is split mono. To
prove the latter, let us consider the symmetrisation operator $\frac{1}{n!}
\sum_{\sigma \in \Perm} \sigma : V^{\otimes n} \to V^{\otimes
n}$~\cite{bourbaki98,comon08} --  it is a contraction because the
properties of norms entail,
\[
\left \lVert \frac{1}{n!} \sum_{\sigma \in \Perm} \sigma \right \rVert \leq
\frac{1}{n!} \sum_{\sigma \in \Perm}\norm{\sigma} = \frac{1}{n!}
\sum_{\sigma \in \Perm} 1 = 1
\]
It is then straightforward to show that this operator restricts on the codomain
to a linear map $\partial^n : V^{\otimes n} \to E_n(V)$ by taking advantage of
the fact that $\Perm$ is a group. Moreover $E_n(V)$ inherits its
norm from $V^{\otimes n}$ which yields $\norm{\partial^n} = \norm{\frac{1}{n!}
\sum_{\sigma \in \Perm} \sigma}$. Thus $\partial^n$ is a linear
contraction as well. Next, in order to prove that $\partial^n$ is a retraction of
$\eps$ consider the following facts. By construction we have $\sigma \comp \eps = \eps$
for all symmetries $\sigma\in\Perm$ which gives rise to the equation $\left
(\frac{1}{n!} \sum_{\sigma \in \Perm} \sigma \right ) \comp \eps =\partial^n\comp \eps= \id
\comp\, \eps$. Moreover $\eps$ is an inclusion. Therefore for every vector $v \in E_n(V)$
we obtain, 
\begin{flalign*} & \, \partial^n (\eps
(v)) = \textstyle{\frac{1}{n!} \sum_{\sigma \in \Perm}
\sigma } (\eps (v)) = \eps(v) = v 
\end{flalign*} 
The final step is to prove that every cone $f : U \to V^{\otimes n} \, \hat
\otimes_\pi \, W$ factorises uniquely through $\eps \otimes \id$. By composition
we obtain a linear contraction $(\partial^n \otimes \id) \comp f : U \to E_n(V)
\, \hat \otimes_\pi \, W$. Let us show that it factorises $f$ through $\eps
\otimes \id$. Consider a vector $u \in U$. By construction we know that
$f(u) = \sigma \otimes
\id \, (f(u))$ for all permutations $\sigma$ on $n$. This entails,
\begin{flalign*}
        & \, f(u) & \\
        & = \frac{1}{n!} \sum_{\sigma \in \Perm} \, \sigma \otimes \id \, 
        (f(u)) & \\
        & = \frac{1}{n!} \left ( 
                \sum_{\sigma \in \Perm} \, \sigma \right ) \otimes \id \, 
        (f(u)) & \text{\{Addition distributes over $(- \otimes \id)$\}}
        \\ 
        & = \left ( \frac{1}{n!} 
                \sum_{\sigma \in \Perm} \, \sigma \right ) \otimes \id \, 
        (f(u)) & \text{\{Scaling distributes over $(- \otimes \id)$\}}
        \\
        & = \partial^n \otimes \id \,  (f(u)) 
\end{flalign*}
We thus obtain the chain of equalities $(\eps \otimes \id) \comp (\partial^n \otimes
\id) (f(u))  = (\eps \otimes \id) (f(u)) = f(u)$.  Finally unicity follows
from the fact that $\eps$ is split mono.
\end{proof}

This yields a canonical $\N$-Lipschitz exponential comonad on $\Ban$, and we can interpret
$\sem{\g{n} \real}\defeq E_n\sem{\real}=E_n(\meas \R)$ where $\meas\R$ is the
Banach space of finite measures on $\R$. Note that the elements of $\sem{\g{n}
\real}$ are invariant under all permutations in $\Perm$, but need not in
general be \iid distributions. For example $\sem{\prog{replace}(k,m,n)}$
corresponds to the \iid case as it is given by the $k$-fold tensor of the
distribution $\mathrm{Bern}(\nicefrac{n}{(n+m)})$, but
$\sem{\prog{no\_replace}(k,m,n)}$ is permutation-invariant without being \iid
Quite a lot is know about permutation-invariant distributions like these,
usually known as \emph{finite exchangeable sequences} in the probabilistic
literature. In particular,~\cite{diaconis1980finite} shows
that the following metric axiom is sound.
\begin{flalign}
	\infer{\prog{replace}(k,m,n)=_{\nicefrac{4k}{(m+n)}}\prog{no\_replace}(k,m,n)}
	{} \label{eq:diaconis}
\end{flalign}

The denotation of $\prog{iid\_normal}(k;\mu,\sigma)$ is the linear, norm-1 operator defined by the Markov kernel $\R\times\R^+\to(\meas\R)^{\otimes n}\to\meas(\R^n), (\mu,\sigma)\mapsto\mathrm{Normal}(\mu,\sigma)^{\otimes n}$.
There is no known closed-form expression for the total variation distance between
Gaussian distributions. However, upper bounds are known. In particular,
following~\cite[Prop.~1.2]{devroye2018total}, we know that the metric axiom
below is sound.
\begin{flalign}
	\infer{\prog{iid\_normal}(k;\mu_1,\sigma_1)=_{\phi(\mu_1,\sigma_1,\mu_2,\sigma_2)}\prog{iid\_normal}(k;\mu_2,\sigma_2)}{}\label{eq:gaussians}
\end{flalign}
where $\phi(\mu_1,\sigma_1,\mu_2,\sigma_2)=\frac{1}{2}\sqrt{k\left(\frac{\sigma_2^2-\sigma_1^2+(\mu_1-\mu_2)^2}{\sigma_1^2}-\log\left(\frac{\sigma^2_2}{\sigma^2_1}\right)\right)}$. 

Now based on these axioms, and the metric equational rules of
Fig.~\ref{fig:theo_rules} we can easily bound the total variation distance
between the final position of two complex $k$-steps random walks of the type
used in Monte-Carlo simulations (\eg to value options~\cite{hull2003options}).
For example, consider first the random walk on $\R$ where at each step the
\emph{sign} of the jump is determined by a sample from $\prog{replace}(k,m,n)$
and its \emph{magnitude} by a sample from
$\prog{iid\_normal}(k;\mu_1,\sigma_1)$.  Suppose we want to bound the distance
of this walk with one whose sign is sampled from $\prog{no\_replace}(k,m,n)$
and magnitude from $\prog{iid\_normal}(k;\mu_2,\sigma_2)$ instead. Working
directly at the level of the semantics, this would be a highly non-trivial
task, however if we express these walks as programs in our graded system we can
straightforwardly compute such a bound. The walks can be programmed as follows:
\begin{flalign*}
	&\prog{walk1}\defeq\prog{pr}_{k,[1,1]}~\prog{replace}(k,m,n), \prog{iid\_normal}(k;\mu_1,\sigma_1)~\prog{fr}~x,y.~(2\ast \prog{dr}(x)-1)\ast\prog{dr}(y):~!_k\real
	\\
	&\prog{walk2}\defeq\prog{pr}_{k,[1,1]}~\prog{no\_replace}(k,m,n), \prog{iid\_normal}(k;\mu_2,\sigma_2)~\prog{fr}~x,y.~(2\ast\prog{dr}(x)-1)\ast\prog{dr}(y):~!_k\real
	\\
    &\prog{endpoint}(w)\defeq \prog{cp}_{(1,\ldots,1)}~w~\prog{to}~x_1,\ldots,x_n.~\prog{dr}(x_1)+\ldots+\prog{dr}(x_n):\real.
\end{flalign*}	
Using the metric axioms \eqref{eq:diaconis}-\eqref{eq:gaussians} and
Fig.~\ref{fig:theo_rules}, the bound can be straightforwardly checked to be
$\prog{endpoint(walk1)}=_{\nicefrac{4k}{(m+n)}+\phi(\mu_1,\sigma_1,\mu_2,\sigma_2)}\prog{endpoint(walk2)}$.
The higher-order features of the language would allow us to write the program
above more modularly by introducing an iterator and still reason
quantitatively about it. We chose the shorter, less modular presentation above
in the interest of brevity.

\section{Conclusions and future work}
We presented a sound and complete $\V$-equational system for a graded
$\lambda$-calculus via the notion of a Lipschitz exponential comonad.  We
showed how to build such comonads canonically via a universal construction and
applied our results to both timed and probabilistic computation.  There are
multiple research lines which we intend to explore next.  First, we believe
that the construction of Lipschitz exponential comonads is interesting
\emph{per se} and that it deserves further exploration from a more categorical
perspective.  For example, we are interested in knowing whether the adjunction
involved is monoidal and whether it arises from the development of general
results about graded (co)equational theories over (enriched) monoidal
categories.  Second, our results were applied to the setting of metric
equations only but they go beyond that -- in particular, we would like to
explore as well the inequational, ultra-metric, and fuzzy cases due to their
increasing relevance in the literature.  Third, whilst we presented relatively
straightforward metric equational theories and corresponding models for timed
and probabilistic computation, we are also interested in knowing whether the
same can be done for hybrid~\cite{neves18,goncharov20} and
quantum~\cite{nielsen02,neves22} computation, two rapidly emerging paradigms
with an intrinsically quantitative nature. Finally we are also interested in
knowing if there is any formal connection with previous work on the notion of
comonadic lax extension and the relational semantics involving modal
types~\cite{dal2022relational,abel2020unified}

\medbreak\noindent
\textbf{Acknowledgements.}
This work is financed by National Funds through FCT - Fundação para a Ciência e
a Tecnologia, I.P. (Portuguese Foundation for Science and Technology) within
project IBEX, with reference PTDC/CCI-COM/4280/2021.

\bibliographystyle{./entics}
\bibliography{notes}

\end{document}